\newtheorem{definition}{Definition}
\newtheorem{lemma}{Lemma}
\newtheorem{theorem}{Theorem}
\newtheorem{corollary}{Corollary}
\newcommand{\remove}[1]{}
\newcommand{\N}{{\mathbb{N}}}
\def \N {\mathbb{N}}
\def \demand {demand}
\def \radius {radius around}
\def \hH {{{\hat{H}}}}
\def \hG {{{\hat{G}}}}
\def \G {{{G}}}
\def \H {{{H}}}
\def \true {{\tt true}}
\def \false {{\tt false}}
\def \n {{p}}
\def \hn {{{\hat{p}}}}
\def \hS {{{\hat{S}}}}
\def \hb {{{\hat{b}}}}
\def \d {d}
\def \t {t}
\def \htt {{{\hat{t}}}}
\def \r {r}
\def \s {s}
\def \q {q}
\def \bd {{\bf d}}
\def \one {{\bf 1}}
\def \bt {{\bf t}}
\def \bu {{\bf 1}}
\def \hbt {{{\hat{{\bf t}}}}}
\def\tw{{\tt{tw}}}
\def\nd{{\tt{nd}}}
\def\cw{{\tt{cw}}}
\def\mw{{\tt{mw}}}
\def\bw{{\tt{bw}}}
\def\dmax{\delta}
\def\tmax{\tau}
\def \DVD {{\sc DVD}}
\def \VD {{\sc VD}}
\def \RD{{\sc RD}}
\def\DVDl{Distance Vector Domination}
\def\VDl{Vector Domination}
\def\RDl{R-Domination}
\def \Ecd {{E_{cd}}}
\def \Vc {{V_c}}
\def \Lc {{L_c}}
\def \Ld {{L_d}}
\def \Lcn {{\Lc\mbox{-neg}}}
\def \Lcp {{\Lc\mbox{-pos}}}
\def \Lcg {{\Lc\mbox{-guard}}}
\def \ecdo {{e_0^{cd}}}
\def \ecdm {{e_\s^{cd}}}
\def \ecdj {{e_j^{cd}}}
\def \vco {{v_0^c}}
\def \vcu {{v_1^c}}
\def \vcr{{v_r^c}}
\def \vci {{v_i^c}}
\def \vch {{v_h^c}}
\def \Lcd {{L_{cd}}}
\def \Lcdn {{\Lcd\mbox{-neg}}}
\def \Lcdp {{\Lcd\mbox{-pos}}}
\def \Lcdg {{\Lcd\mbox{-guard}}}
\def \Mcd {{M_{cd}}}
\def \Mcdn {{\Mcd\mbox{-neg}}}
\def \Mcdp {{\Mcd\mbox{-pos}}}
\def \Mcdg {{\Mcd\mbox{-guard}}}
\def \Iccd {{I_{c:cd}}}
\def \Idcd {{I_{d:cd}}}
\def \Iccdp {{\Iccd\mbox{-pos}}}
\def \Iccdn {{\Iccd\mbox{-neg}}}
\def \Iccdg {{\Iccd\mbox{-guard}}}
\def \Idcdp {{\Idcd\mbox{-pos}}}
\def \Idcdn {{\Idcd\mbox{-neg}}}
\def \Idcdg {{\Idcd\mbox{-guard}}}
\def \MQ {{\sc MQ}}
\def \bug{bag}
\newcommand{\Ni}[2]{N_{#1}(#2)}
\newtheorem{observation}[theorem]{\textbf{Observation}}
\title{Distance Vector Domination}
\date{September 9, 2024}
\date{}
\newif\ifuniqueAffiliation
\author{ 
\href{https://orcid.org/0000-0000-0000-0000}{
\hspace{1mm}Gennaro Cordasco
\\
Department of Psychology, \\
University of Campania ''L.Vanvitelli'', Italy. \\ \texttt{gennaro.cordasco@unicampania.it}\\
	\And
	\href{https://orcid.org/0000-0000-0000-0000}
    Luisa Gargano, Adele A. Rescigno \\
	Department of Computer Science, \\
    University of Salerno, Italy.\\
	\texttt{\{lgargano,arescigno\}@unisa.it} \\

}
\else
\usepackage{authblk}

\setlength{\affilsep}{0em}
\newbox{\orcid}\sbox{\orcid}{\includegraphics[scale=0.06]{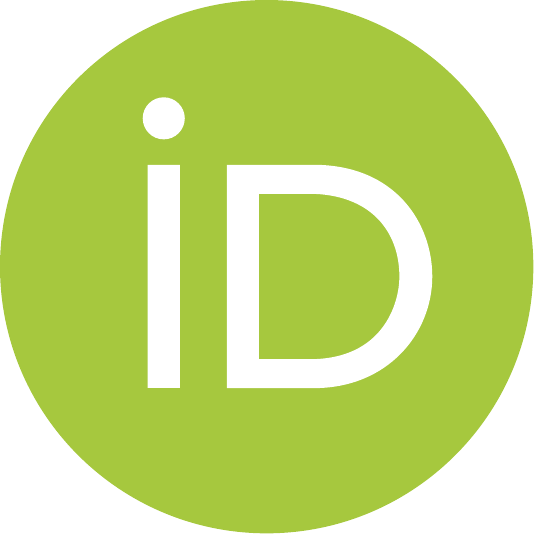}} 
\author[1]{%
	\href{https://orcid.org/0000-0001-9148-9769}
   {\usebox{\orcid}\hspace{1mm}Gennaro Cordasco
   }%
}
\author[2]{%
	\href{https://orcid.org/0000-0003-3459-1075}{\usebox{\orcid}\hspace{1mm}Luisa Gargano\thanks{\texttt{
Work  partially supported by project SERICS (PE00000014)
under the MUR National Recovery and Resilience Plan funded by the
European Union - NextGenerationEU.}}}%
}
\author[2]{%
	\href{https://orcid.org/0000-0001-9124-610X}{\usebox{\orcid}\hspace{1mm}Adele A. Rescigno$^*$}%
}
\affil[1]{Department of Psychology, 
University of Campania ''L.Vanvitelli'', Italy, {\tt{gennaro.cordasco@unicampania.it}}}
\affil[2]{Department of Computer Science, 
    University of Salerno, Italy, {\tt{lgargano@unisa.it}}}
    \affil[3]{Department of Computer Science, 
    University of Salerno, Italy, {\tt{arescigno@unisa.it}}}


\hypersetup{
pdftitle={{Distance Vector Domination}},
pdfsubject={},
pdfauthor={Gennaro Cordasco, Luisa Gargano and Adele A. Rescigno},
pdfkeywords={Distance Domination, Vector Domination, Parameterized complexity,
  Neighborhood diversity, Modular-width, Treewidth},
}

\begin{document}
\maketitle

\baselineskip=0.5truecm
\begin{abstract}
 Identifying and mitigating the spread of fake information is a challenging issue that has become dominant with the rise of social media. We consider a generalization of the Domination problem that can be used to detect a set of individuals who, once immunized, can prevent the spreading of fake narratives. The considered problem, named {\em Distance Vector Domination} generalizes both distance and multiple domination, at individual (i.e., vertex) level. We study the parameterized complexity of the problem according to several standard and structural parameters.  We prove the W[1]-hardness of the problem with respect to neighborhood diversity, even when all the distances are  $1$. We also give fixed-parameter algorithms for some variants of the problem and parameter combinations. 
\end{abstract}


\baselineskip=0.7truecm
\section{Introduction} 
\label{intro}
 Domination is a fundamental concept in graph theory, which deals with the idea of dominating sets within a graph. In this problem, you seek to find the smallest set of vertices in a graph in such a way that every vertex in the graph is either in the dominating set or adjacent to a vertex in the dominating set. Dominating sets are critical in various real-world applications across fields that involve networks, connections, and coverage \cite{BJP,sensor1,sensor2,CGL+,sensor3}.
  \\
 In social network analysis, it can be   used to identify key influencers or individuals who can drive information diffusion, trends, or behaviors within a network \cite{CGL+
 }. It helps businesses and researchers target their efforts efficiently \cite{BJP}.
 \\
 In wireless sensor networks, where sensors are placed to monitor an environment or collect data \cite{sensor3,sensor2,sensor1}, the Domination Problem helps minimize the number of sensors while ensuring full coverage and connectivity.

 In facility location and placement, it helps determine the optimal locations for services, facilities, or resources to ensure that they are accessible to a maximum number of people while minimizing the number of locations needed.

The Domination Problem in graph theory has several important variants that focus on different aspects of the problem. We focus on generalizations to distance domination and multiple domination.
\textsl{Multiple Domination}: Every vertex in the graph must  either be part of a dominating set or adjacent to a prescribed number of vertices in a dominating set. The goal is to minimize the size of the dominating set. 
 \textsl{Distance Domination}:  For any vertex $v$ not in the dominating set, there must be at least one vertex within a specified distance  from $v$ that is part of the dominating set; the goal is to minimize the size of the dominating set.  
Distance and multiple domination provide ways to address practical concerns related to the physical or operational limitations of networks.  Application areas can include wireless sensor networks, facility location, communication networks, and transportation network design.
We refer to \cite{HHS1,HHS2} for a survey of domination in graphs and its several variations.

This paper is motivated by an application aimed at combating the spread of misinformation using epidemiological principles. Graph-based information diffusion algorithms offer a means to analyze the dissemination of both genuine and fake information within a network. Sociologists widely employ threshold models to characterize collective behaviors \cite{granovetter}, and their application to studying the propagation of innovations through networks was initially proposed in \cite{Kempe}.
The linear threshold model has then been extensively employed  in the literature to study influence maximization,  a critical   problem    in network analysis, which aims at identifying a small subset of vertices capable of maximizing the diffusion of content throughout the network 
\cite{Ben-Zwi,CCG+,SNAM,CGMRVa,CGMRV,asonam,CGRV,active,itp}.
While these algorithms are not designed for intercepting fakes, they can be used as a component in a broader strategy for identifying and mitigating the spread of fake information.

Controlling the spread of misinformation/disinformation is an ongoing challenge. Strategies for reducing the spread size by either blocking some links, so that they cannot contribute to the diffusion process \cite{Kimuraetal}, or by immunizing/removing vertices were considered in several papers \cite{barabasi,newman}.
In this paper, we focus on the second strategy: limit the spread by immunizing a bounded number of vertices in the network.
We consider a population of interconnected individuals that can potentially be misinformed by a word-of-mouth diffusion strategy. We assume that when an individual is reached by a sufficient amount of debunking information, he/she becomes immunized. With more details, an individual gets immunized if he/she receives the debunking information from a number of neighbors at least equal to its threshold. Moreover, each individual has a certain level of trust in the others (circle of trust) described by a radius around it. Only debunking information coming from within the circle of trust is considered reliable.
In particular, we consider the use of generalized dominating sets to detect a group of individuals who, by setting a debunking (or prebunking) campaign, can prevent the spreading of negative narratives. 
In the presence of a debunking (or prebunking) campaign,  the \emph{immunization} operation on a vertex inhibits the contamination of the vertex itself. Thus, to avoid the diffusion of malicious items, we are looking for a small subset $T$ of vertices (immunizing set) that, by spreading the debunking information, enables us to  {stop} the misinformation diffusion. The immunizing set should be able to cover each vertex multiple times (based on the vertex threshold)  within a maximum distance (depending on the radius that specifies the circle of trust of the vertex). 
We propose the \DVDl\ problem,  which includes both multiplicity and distance. 
\def\dist{\delta}

\smallskip
\noindent \textbf{The Problem.} 
Let $G=(V,E)$ be a graph. We denote by $n=|V|$ the number of vertices in $G.$ For a set of vertices  $X\subseteq V$,  we denote by $G[X]$  the induced subgraph of $G$ generated by $X$.
Given two vertices  $u,v\in V$, we denote by $\dist_G(u,v)$ the distance between  $u$ and $v$ in $G$. 
Moreover, for a vertex $v\in V$, we denote by $\Ni{G}{v}=\{u\in V \mid  (u,v)\in E\}$ the 
neighborhood of $v$  and  by $\Ni{G,\d}{v}=\{u \in V \mid u \neq v \land \dist_G(u, v)\leq  \d\}$ the {\em neighborhood of radius $\d$ around $v$}. Clearly, $\Ni{G,1}{v} = \Ni{G}{v}$.
We also define the distance between a vertex $v\in V$ and a set $U\subseteq V$ as $\dist_G(v,U)=\min_{u\in U}\dist_G(v,u).$
We  
omit the subscript $G$ whenever the graph is clear from the context.
\begin{definition}
Given a 
graph $G=(V,E)$ and   vectors $\bt=(\t_v \mid \t_v\in \N, \, v \in V)$ and 
$\bd=(\d_v \mid \d_v\in \N,\,v \in V )$, 
where $\t_v\leq |\Ni{\d_v}{v}|$, a  Distance Vector Dominating set  $S$  is a set $S \subseteq V$ such that
 $|\Ni{\d_v}{v}\cap S|\geq \t_{v}$,   for all $v \in V\setminus S$.
\end{definition}
We will consider the following problem.
\begin{quote}
\textsc{\DVDl\ (\DVD)}:\\
\textbf{Input:} A 
graph $G=(V,E)$,   vectors $\bt=(\t_v \mid \t_v\in \N,\, v\in V)$  and \\
 \hphantom{Input: }    $\bd=(\d_v \mid \d_v\in \N,\, v \in V)$.\\ 
\textbf{Output:} A Distance Vector Dominating set of minimum size.
\end{quote}
For each vertex $v \in V$,  we will refer to $\t_v$ and $\d_v$ as the {\em \demand} of $v$ and the {\em \radius} $v$, respectively.
Furthermore, given a set $S \subseteq V$, we say that a vertex $v \in V \setminus S$ is {\em dominated by $S$} if $|\Ni{\d_v}{v}\cap S|\geq \t_{v}$.
{
Since the problem can be solved independently in each connected component of the input graph, from now on, we assume that the input graphs are connected. }

 \DVD\  generalizes several well-known and widely studied problems. Consider an input graph $G=(V,E)$ together with   vectors  $\bt=(\t_v \mid \t_v\in \N,\,v \in V)$ and $\bd=(\d_v \mid \d_v\in \N,\,v \in V)$: 
\begin{itemize} 
\item 
When
$\bt=\bd=\one=(1,\ldots,1),$ \DVD\   becomes the classical \textsc{Dominating Set (DS)} problem \cite{HHS1}. 
\item
If  $\bd=\one$
and $\bt=(\t_v \mid \t_v\in \N,\, v\in V)$, 
then \DVD\  becomes the \textsc{\VDl\ (\VD) } problem,
which asks for a minimum size   set $S \subseteq V$  such that $|N(v)\cap S|\geq \t_{v}$,   for each $v \in V\setminus S$. \VDl, introduced in \cite{GH},  has been extensively studied \cite{CMV,HPV,MP,LWS} and
 was recently studied from the parameterized complexity point of view in \cite{IOU,RSS}.
The special case  $\bt=(r,\ldots, r)$ for some positive integer $r$ has been studied under the name of 
\textsc{$r$-Domination} \cite{Flink85}.
\item
The problem corresponding  to the  case   
$\bt=\one$ and   $\bd=(\d_v \mid \d_v\in \N,\, v\in V)$ was introduced  by Slater in \cite{Slater} under the name of
\textsc{\RDl\ (\RD)}.
The special case   
$\bd=(d,\ldots, d)$, for some positive integer $d$, has been studied under the name of  \textsc{Distance Domination (DD)}
 \cite{Henning20}. 
\end{itemize}
 Knowing that \DVD\ generalizes the \VD\ problem \cite{CMV}, we immediately have  
 \begin{theorem}
 \DVD \  cannot be approximated in polynomial time  to within a factor of  {$0.2267\log n$}
 ,  unless P=NP.
 \end{theorem}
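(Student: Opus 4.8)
The plan is to read off the claim as an immediate consequence of the fact, already recorded in the excerpt, that \DVD\ contains \VD\ as the special case $\bd=\one$. So first I would make that containment precise as a reduction. Given an arbitrary instance $(G,\bt)$ of \VD\ on $n=|V|$ vertices, consider the triple $(G,\bt,\one)$. It is a legal instance of \DVD: the feasibility requirement $\t_v\le|\Ni{\d_v}{v}|$ specializes, since $\d_v=1$, to $\t_v\le|\Ni{1}{v}|=|\Ni{G}{v}|$, which is exactly the feasibility requirement of \VD; and because $\Ni{1}{v}=\Ni{G}{v}$ for every $v\in V$, a set $S\subseteq V$ satisfies $|\Ni{\d_v}{v}\cap S|\ge\t_v$ for all $v\in V\setminus S$ if and only if $|\Ni{G}{v}\cap S|\ge\t_v$ for all $v\in V\setminus S$. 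Hence $S$ is a Distance Vector Dominating set of $(G,\bt,\one)$ precisely when it is a Vector Dominating set of $(G,\bt)$, and in particular the two instances have the same optimum value.

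Next I would observe that this map $(G,\bt)\mapsto(G,\bt,\one)$ is an approximation-preserving reduction that leaves the underlying graph — hence the instance size $n$ — untouched, and preserves the objective exactly. Therefore any polynomial-time algorithm that approximates \DVD\ within a factor $\alpha(n)$ would, run on $(G,\bt,\one)$ and interpreting the returned set as a subset of $V$, approximate \VD\ within the same factor $\alpha(n)$ on $n$-vertex graphs. It then suffices to invoke the inapproximability of \VD\ itself: by the hardness result of \cite{CMV}, \VD\ admits no polynomial-time $0.2267\log n$-approximation unless $\mathrm{P}=\mathrm{NP}$ (that bound being obtained there through an approximation-preserving reduction from \textsc{Set Cover}/\textsc{Dominating Set}, with the constant $0.2267$ accounting for the polynomial blow-up of the reduction). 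Chaining the two reductions yields the stated bound for \DVD.

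The only point that genuinely requires attention — and the reason I would spell the reduction out rather than just say ``generalizes'' — is that the restriction to $\bd=\one$ must not inflate the instance, since a polynomial size increase $n\mapsto n^{k}$ would weaken a $c\log n$ lower bound to a $(c/k)\log n$ one; here this is vacuous, because the reduction is the identity on $G$. Beyond this bookkeeping there is no real obstacle: all the quantitative content is inherited from \cite{CMV}, and our part of the argument is purely the syntactic observation that \VD\ sits inside \DVD\ as the $\bd=\one$ slice.
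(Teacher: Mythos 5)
Your proof is correct and is essentially the paper's own argument: the paper derives this theorem immediately from the observation that \VD\ is the special case $\bd=\one$ of \DVD, citing the $0.2267\log n$ inapproximability of \VD\ from \cite{CMV}, exactly as you do. Your additional care that the reduction is the identity on $G$ (so $n$ and hence the constant in the logarithmic factor are preserved) is a sensible explicit check that the paper leaves implicit.
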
 
Moreover, following the lines of the proof of Theorem 1 in \cite{CMV}, one can easily get a logarithmic approximation algorithm. 
 %
\begin{theorem} \label{th_approx}
\DVD\ can be approximated in polynomial time by a factor 
$\log n+2.$
\end{theorem}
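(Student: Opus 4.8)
The plan is to model \DVD\ as a \emph{minimum submodular cover} problem (equivalently, a set--multicover instance) and then run the classical greedy algorithm, whose guarantee is given by Wolsey's analysis; this mirrors the proof of Theorem~1 in \cite{CMV}, the only novelty being that the radius-$\d_v$ neighbourhoods $\Ni{\d_v}{v}$ take the place of the ordinary neighbourhoods and can be precomputed by running a BFS from each vertex in polynomial time.

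Concretely, for every $v\in V$ and every $S\subseteq V$ set $f_v(S)=\t_v$ if $v\in S$, and $f_v(S)=\min\{\t_v,\,|\Ni{\d_v}{v}\cap S|\}$ otherwise, and let $f(S)=\sum_{v\in V}f_v(S)$. Then $f$ is integer-valued, non-negative, monotone, $f(\emptyset)=0$, $f(V)=\sum_{v\in V}\t_v$, and --- the key equivalence to check --- $f(S)=f(V)$ holds if and only if $S$ is a Distance Vector Dominating set (since $f_v(S)\le\t_v$ always, equality of the two sums forces $f_v(S)=\t_v$ for every $v$, which for $v\notin S$ says exactly that $v$ is dominated). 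The step that needs a bit of care is submodularity, i.e.\ that the marginal gain $f(S\cup\{u\})-f(S)$ is non-increasing in $S$; the only delicate case is the term $f_v$ when $v$ itself enters $S$ --- a vertex in the solution is dominated for free --- where one must verify that crediting $v$ with its full \demand\ $\t_v$ at that moment is still compatible with diminishing returns. I expect this bookkeeping, not anything conceptual, to be the main obstacle. Should it get cumbersome, the alternative is to keep the explicit set--multicover picture (one demand token per unit of \demand; a vertex $w$ may cover one token of each $v$ with $w\in\Ni{\d_v}{v}$ and all $\t_w$ tokens of $w$) and analyse the same greedy directly by the standard dual-fitting/charging argument for multicover.

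The algorithm is then the obvious one: start from $S=\emptyset$ and repeatedly add a vertex $u\notin S$ maximizing $f(S\cup\{u\})-f(S)$ until $f(S)=f(V)$. After the initial BFS preprocessing, each iteration evaluates $O(n)$ marginal gains in polynomial time and there are at most $n$ iterations, so the algorithm runs in polynomial time. By Wolsey's theorem for minimum submodular cover, the output has size at most $H\big(\max_{u\in V}f(\{u\})\big)\cdot\mathrm{opt}$, where $H(m)=\sum_{i=1}^{m}1/i$ and $\mathrm{opt}$ is the size of a minimum Distance Vector Dominating set. Finally, for a single vertex $u$ we have $f(\{u\})=\t_u+\big|\{v\in V\setminus\{u\}: u\in\Ni{\d_v}{v}\}\big|\le(n-1)+(n-1)=2n-2$, using $\t_u\le|\Ni{\d_u}{u}|\le n-1$; hence the approximation ratio is at most $H(2n-2)\le 1+\ln(2n-2)\le 1+\ln 2+\ln n<\log n+2$, which is the claimed bound (and the same chain of inequalities keeps it below $\log_2 n+2$).
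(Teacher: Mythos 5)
Your proposal is correct and follows essentially the same route as the paper's own proof: the same potential function $g(S)=\sum_v \min\{|N_{\d_v}(v)\cap S|,\t_v\}$ (with full credit $\t_v$ for $v\in S$), the same greedy algorithm, and the same appeal to Wolsey's submodular-cover bound with $\max_u g(\{u\})\le 2n-2$, yielding $\log n+2$. The submodularity check you flag as delicate does go through (and the paper likewise defers it to the argument of Theorem~1 in \cite{CMV}), so no gap remains.
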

%

\def\XP{{\em XP}}
\subsection{Parameterized Algorithms}
Parameterized complexity is a refinement to classical complexity  in which one takes into account, not only the input size but also other aspects of the problem given by a parameter $p$. 
We recall that a problem with input size $m$ and parameter $p$ is called {\em fixed parameter tractable (FPT)} if it can be solved in time $f(p) \cdot m^c$, where $f$ is a computable function only depending on $p$ and $c$ is a constant.  {A problem is in \XP\ parameterized by $p$ if it can be solved in time $m^{f(p)}$, where $f$ is a computable function only depending on $p$.}

\smallskip
\noindent {\bf Known  results.} 
The \DVD\ problem, as well as each of the special cases described in Section \ref{intro},  is  W[2]-hard with respect to the size $k$ of the solution since they all generalise the W[2]-complete {\sc Dominating Set} problem \cite{DF}.

It is shown in \cite{Betzler2012} that  \VD\  is W[1]-hard with respect to treewidth, thus implying that \DVD\ is W[1]-hard with respect to treewidth.
{Recently, Lafond and Luo \cite{LL} presented an FPT algorithm for  
\textsc{$r$-Domination}  parameterized by neighborhood diversity and proved that this problem is W[1]-hard with respect to modular-width.}
In \cite{Borradaile2016} the authors consider the \textsc{DD} problem parameterized by the radius ($d$) and the treewidth ($\tw$). They show an FPT ($O((2d+1)^\tw n)$) algorithm. Moreover, they also show that the running time dependence on $d$ and $\tw$ is the best possible under Strong Exponential Time Hypothesis (SETH) \cite{SETH}.
This lower bound applies also to the \RD\ problem which generalizes  \textsc{DD}.
{In \cite{KLP19} the authors study the ($k,r$)-center problem, which, given a graph $G $, asks if there exists a set $K$ of at most $k$ vertices of $G$, so that $\min_{v \in K} \delta(v, u) \leq r$ for each $ u \notin  K$. The ($k,r$)-Center problem represents the decision version of the \textsc{DD} problem and is W[1]-hard when parameterized by {\tt{fvs}} + $k,$ where {\tt{fvs}} represents the feedback vertex set parameter \cite{KLP19}. Since the treewidth of a graph $G$ is upper bounded by the feedback vertex set number of $G$ plus one, this result implies that    \RD\   is W[1]-hard when parameterized by the treewidth.}

A \XP\ algorithm, with running time $O(n+1)^{O(\cw)}$,  for a generalization of   \textsc{VD} on graphs of bounded clique-width (\cw) has been provided in \cite{CCG+}. Since $\cw \leq 2^ {\tw+1} + 1$ \cite{CO00}, this result implies the \XP\ solvability of the \textsc{VD}  problem for graphs of bounded treewidth.
{Assuming to have a branch decomposition of the input graph of width \bw, a FPT algorithm with running time $O((\tmax+2)^\bw [(\tmax +1)^2+1]^{\bw/2} \, n^2)$ for \VD\ is given  in \cite{IOU}, where $\tmax$ is the largest demand of the vertices, i.e.,  $\tmax=\max_{v \in V} \t_v$. Since $\max\{\bw,2\} \leq \tw +1 \leq \max\{3 \bw/2,2\},$ this result  implies an  $O((\tmax+2)^{\tw{+1}} [(\tmax +1)^2+1]^{(\tw{+1})/2} \, n^2)$ time algorithm for \VD.} 
\\
{An  algorithm for {\sc Dominating Set} problem parameterized by modular-width (\mw) was given by Romanek \cite{R}; it requires $O(2^\mw \, n^2)$ time. However, little work has been done to design FPT algorithms for  \textsc{DD}, \VD\ and \RD\ problems with respect to the neighborhood diversity and/or modular-width parameters. }

\noindent {\bf Our results.} 
We give some positive and negative results with respect to 
some structural parameters of the input graph: modular-width, neighborhood diversity, and treewidth. The definitions of the parameters are given in Sections  \ref{hardness}, \ref{sec-mw}, and \ref{sec-tw}, respectively.
{It is worth mentioning that modular decomposition parameters, which comprise modular-width, neighborhood diversity, and tree-like parameters, such as treewidth and pathwidth, are two incomparable classes   that can be viewed as representing dense and sparse graphs, respectively.}
\begin{itemize}
\item We prove the W[1]-hardness of   \DVD\ with respect to neighborhood diversity even when all the radii are equal to $1$; namely, we show that  \VD\ is W[1]-hard with respect to neighborhood diversity.  
This negative result also applies to any generalization of neighborhood diversity and, in particular to modular-width and clique-width.
\item On the positive side,  we present FPT algorithms parameterized by:\\
{(i) Modular-width for \RD, with running time $O(\mw \; 2^\mw \;  n)$; \\
(ii) Modular-width plus the size $k$ of the solution  for \VD, with running time   $O(\mw \; k(k+1)^{\mw} \; n^2)$; \\
(iii) Modular-width plus the size $k$ of the solution  for \DVD, with running time   $O(\mw^2 \; k(k+1)^{2\mw} \; n^2)$; \\
(iv) Treewidth plus the maximum radius $\dmax=\max_{v \in V} \d_v$ for  \RD,  with  running time  \\
\hphantom{011} $O(\tw(2\dmax{+}1)^\tw   (\tw^2 {+} n) \; n^2 \log n)$;\\
(v)\, Treewidth plus $\tmax{=}\max_{v \in V} \t_v$ for  \VD,  with running time 
$O(\tw^2 2^{\tw}(\tmax{+}1)^{\tw} \; n).$ \\
\hphantom{011} This last result improves the above-described result achieved in \cite{IOU}}.
\end{itemize}

\begin{table}[tb!]
\begin{center}
\begin{tabular}{|l|l|l|l|}
\hline
 Parameters$\backslash$Problem \                & \DVD                  & \VD    & \RD  \\ \hline \hline

\nd            & \textbf{W{[}1{]}-hard} [Cor \ref{DVD-hard-nd}]        & \textbf{W{[}1{]}-hard} [Th \ref{w-nd}]      & \textbf{FPT}  [Th \ref{RD-mw}]                                     \\ \hline
\mw            & \textbf{W{[}1{]}-hard} [Cor \ref{DVD-hard-nd}]      & \textbf{W{[}1{]}-hard} [Th \ref{w-nd}]       & \textbf{FPT}  [Th \ref{RD-mw}]                                  \\ \hline
 \mw\ and $k$      & \textbf{FPT}  [Th \ref{DVD-mw-k}]                &\textbf{FPT}  [Th \ref{VD-mw-k}]                 & \textbf{FPT}  [Th \ref{RD-mw}]                              \\ \hline\hline
\tw            & W{[}1{]}-hard\cite{Betzler2012} \ & W{[}1{]}-hard\cite{Betzler2012}  & 
W{[}1{]}-hard\cite{KLP19}
\\ \hline
\tw\ and  $\dmax$&       open              & Not applicable ($\dmax=1$)      \          & \textbf{FPT}  [Th \ref{RD-tw-d} ]                                   \\ \hline
\tw\ and $\tmax$ &     open              & \textbf{FPT} \cite{IOU} [Th \ref{VD-tw-t}]                 &    
 Not applicable  ($\tmax=1$)   \\                                \hline
\end{tabular}
\end{center}
\caption{
Parameterized complexity results with respect to neighborhood diversity (\nd), modular-width (\mw)  and treewidth (\tw). 
}
\label{tableResults}
\end{table}

\section{Hardness} \label{hardness}
In this section, we prove that   \VD, and as a consequence its generalization \DVD,     is W[1]-hard on graphs of bounded neighborhood diversity. 

\smallskip

\noindent \textbf{Neighborhood diversity.}
Given a graph $G=(V,E)$,  two vertices $u,v\in V$  are said to have the same  
{\em same type} if $N_G(v) \setminus \{u\} = N_G(u) \setminus \{v\}.$\\ The {\em neighborhood diversity} of a graph $G$, introduced by Lampis  \cite{Lampis} and denoted by {\em \nd}$(G)$, is the smallest integer $\nd$ such that there exists a partition  $V_1, \ldots, V_\nd$, of the vertex set $V$, 
where all the vertices in  $V_i$ have the same type, 
for  $i\in [\nd]$\footnote{For a positive integer $a$, we use $[a]$ to denote the set of integers $[a] = \{1, 2, \ldots, a\}$.}.
The unique family 
$\{V_1, \ldots, V_\nd\}$ is called  the {\em type partition} of $G$.
%
\begin{theorem}\label{w-nd}  
  \VD\   is W[1]-hard with respect to 
  neighborhood diversity. 
\end{theorem}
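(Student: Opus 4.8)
The plan is to reduce from a known W[1]-hard problem that is naturally parameterized in a way that translates to a bounded number of vertex types. The canonical candidates are \textsc{Multicolored Clique} (or equivalently \textsc{Clique} parameterized by solution size, or \textsc{Multicolored Independent Set}), since the Lafond–Luo paper proves $r$-Domination is W[1]-hard w.r.t. modular-width via such a route, and here we need the harder statement (neighborhood diversity $\le$ modular-width). Actually, because Vector Domination has *per-vertex* demands, a cleaner starting point may be a problem where one must "pick exactly one element from each of $k$ groups subject to pairwise constraints" — so \textsc{Multicolored Clique} on a graph $H$ with $k$ color classes. I would construct an instance of \textsc{VD} whose type partition has size bounded by a function of $k$ (e.g., $O(k^2)$ types: one type per color class plus one type per ordered pair of color classes encoding the edge-selection gadgets, plus a constant number of auxiliary "enforcer" types).

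First I would set up, for each color class $i\in[k]$ of $H$, a "selection" type $V_i$ containing one vertex per vertex of that color class in $H$; the demands and an attached enforcer gadget (a clique or independent set of forced-in / forced-out vertices with carefully chosen thresholds) would be arranged so that any minimum Distance Vector Dominating set must include exactly one vertex from each $V_i$. Next, for each pair $\{i,j\}$, I would add an "edge-checking" type (or a small constant number of them) whose vertices are adjacent in $G$ exactly to those selection-vertices corresponding to $H$-vertices incident to a given edge between classes $i$ and $j$; the threshold on these checker vertices would be set so that they are dominated only if the chosen pair of selection vertices forms an edge of $H$. The demands $\t_v$ are the main expressive tool — by choosing them close to the degree of $v$ in $G$, one forces "almost all neighbors must be selected," which is how Vector Domination simulates combinatorial constraints. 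I would then prove the two directions: a multicolored clique in $H$ yields a VD-set of the target size by taking the $k$ selection vertices plus the forced gadget vertices; conversely any VD-set of that size, by a counting/budget argument over the enforcer gadgets, must pick exactly one vertex per class, and the checker-domination conditions force those picks to be pairwise adjacent in $H$.

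The main obstacle I anticipate is the tension between two requirements: (a) keeping the neighborhood diversity bounded by $f(k)$, which forces every gadget to be built from a constant (in $k$) number of homogeneous vertex types — so within a type, all vertices have identical neighborhoods outside the type, severely limiting how finely I can "address" individual $H$-vertices; and (b) nonetheless encoding the adjacency structure of $H$, which is genuinely $k$-indexed information. The resolution — and the delicate part of the proof — is that adjacency information can be pushed into the *demands* and into the choice of *which* selection-vertices a given checker type connects to, rather than into the graph's type structure, since two checker vertices with the same neighborhood can still be distinguished only up to type (hence one needs enough checker types, roughly one per pair $\{i,j\}$, so $\nd(G)=O(k^2)$). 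Verifying that the minimum VD-set size has the exact intended value — neither padding gadgets let the adversary "cheat" by dominating checkers some other way, nor do the forced gadget vertices inflate the optimum — will require a careful budget accounting, and that is where I would spend the most care. I would also double-check the boundary hypothesis $\t_v\le|\Ni{}{v}|$ in the definition is respected by every gadget vertex, since demands are being pushed near the degree.
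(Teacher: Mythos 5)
Your high-level plan (reduce from \textsc{Multicolored Clique}, use $O(q^2)$ types, push the adjacency information of $H$ into demands) matches the paper's strategy, but there is a genuine gap at exactly the point you flag as delicate, and your proposed resolution does not work. You suggest that a checker type for a pair $\{i,j\}$ be \emph{adjacent exactly to those selection-vertices corresponding to $H$-vertices incident to a given edge}. Any such wiring destroys bounded neighborhood diversity: if two vertices of the same color class have different incident edge sets in $H$ (the generic case), their corresponding selection vertices acquire different neighborhoods among the checkers, so the ``selection type'' splits into as many types as there are distinct neighborhoods, i.e.\ up to the size of the color class, not $f(q)$. Likewise, ``the minimum solution must include exactly one vertex from each selection type'' cannot carry information in this setting: within a type all vertices are interchangeable, so which single vertex you pick encodes nothing about which $H$-vertex was selected. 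Demands alone on a homogeneous type do not fix this either, because every vertex of the type sees the same number of selected neighbors.

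The paper's proof resolves the tension by a counting (unary) encoding that your proposal never articulates, and this is the missing idea. The choice of the $i$-th vertex of color class $c$ is encoded by \emph{how many} solution vertices are taken from two interchangeable cliques ($i$ from $\Lcp$ and $\r-i$ from $\Lcn$, with a guard bag forcing the total to be exactly $\r$); similarly the choice of the $j$-th edge between classes $c$ and $d$ is encoded by taking $2\r j$ vertices from $\Lcdp$ and $2\r(\s-j)$ from $\Lcdn$, with a multiple gadget forcing the split to be a multiple of $2\r$. Consistency between the selected vertex and the selected edge is then checked by incidence bags whose vertices have pairwise distinct demands of the form $2\r j+i$ (and $2\r(\s-j)+\r-i$): a budget argument shows all but one vertex per incidence pair can be dominated exactly when the chosen counts correspond to an incident vertex--edge pair. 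All edges in the construction run between whole bags/cliques, so every gadget piece is a single type and $\nd(G')=3q+12\binom{q}{2}$. Without this count-based addressing (or an equivalent mechanism), the two requirements you identify --- bounded $\nd$ and encoding the adjacency of $H$ --- remain in conflict, and the reduction as you sketch it cannot be completed.
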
 
\proof
We use a reduction from 
{\sc{Multi-Colored clique (MQ)}}:
{\em Given   a graph $G=(V,E)$ and  a  proper vertex-coloring 
${\bf c} : V \to [ \q]$ for $G$, 
does $G$ contain  a  clique  of size $\q$?}

It is worth noticing that a multi-colored clique of size $q$ has one vertex of each color. Hence,    a vertex $v$ can belong to a multi-colored clique only if $ N_{G}(v)\cup\{v\}$ contains at least one vertex from each color class.  Hence, in the following, we will assume that all the vertices that do not satisfy such a property are removed from $G$ since they are irrelevant to the problem.

Given an instance $\langle G, {\bf c},\q \rangle$ of  {\sc MQ}, we construct $\langle G'=(V', E'),\bt,\one, k \rangle$, an instance  of the decision version of \VD.
Our goal is to guarantee that any solution of  \VD\  in $G'$ of size $k$ encodes a multi-colored clique in $G$ of size $q$ and vice-versa.

For a color $c \in [q]$, we denote by $V_c$ the class of vertices in $G$ of color $c$ and for a pair of distinct colors  $c,d \in [q],$ we let $\Ecd$ 
represent the edges in $G$ connecting a vertex in   $V_c$ and one in $V_d$.
We use the fact that  \MQ\  remains W[1]-hard 
even if each color class has the same size,  and between every pair of color classes we have the same number of edges   \cite{CFKLMPPS15}.
We then denote by $\r+1$ the size of each color class  
 $V_c$ 
 and by $\s + 1$ the size of each set $\Ecd$ (notice that $\r,\s \geq 1$ since otherwise $G$ is a clique).
We  use the following notation 
\begin{equation}\label{VE} 
V_c =\{\vco,\vcu, \ldots, \vcr\},\  \qquad\   \Ecd=\{\ecdo, \ldots, \ecdm\}
\qquad  \mbox{ $c,d \in [q]$, $c\neq d$}
\end{equation}
and refer to $\vci$ and $ \ecdj$ as the $i$-th vertex in $V_c$ and  $j$-th edge in $\Ecd$, respectively.

Let $\langle G,{\bf c}, \q \rangle$ be an instance of \MQ. 
 We describe a reduction from  $\langle G, {\bf c}, \q \rangle$  to an instance $\langle G', \bt,\one, k \rangle$ of \VD\  such that $\nd(G')$ is $O(\q^2)$.
To this aim, we introduce some gadgets 
for the construction of  $G'$, inspired by those used in \cite{DKT16}.
The rationale behind the construction is the following. First, we create two sets of gadgets (Selection and Multiple), which encode in  $G'$ the selection of vertices and edges as part of a potential multi-colored clique in $G$. Then we create another set of gadgets (Incidence gadgets) that is used to check whether the selected sets of vertices and edges 
form a multi-colored clique. Our goal is to guarantee that any solution of \VD\ in $G'$ encodes a multi-colored clique in $G$ and vice-versa.

In the following we  call   {\em \bug}
an independent set of vertices sharing all neighbors.
Hence, a connection between two \bug s  implies that the vertices in the  \bug s induce a complete bipartite graph.
We will also use {\em cliques}; the connection between two cliques is a complete bipartite graph among the vertices in the cliques.
Fig. \ref{reduction-nd} shows the gadgets we are going to introduce and how they are connected.

\begin{figure}[t!]
	\centering            \includegraphics[width=14truecm,keepaspectratio]{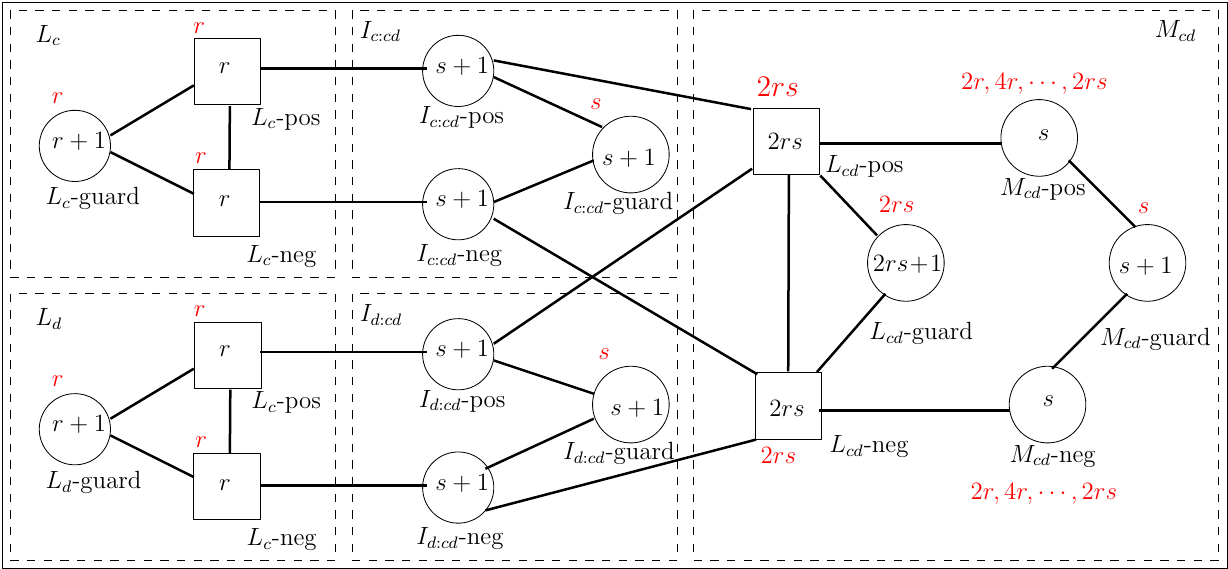} 
	\caption{\small An overview of the reduction. Each circle represents a \bug. Each square represents a clique. The number inside a \bug\ (resp. clique) is the number of vertices of
the \bug\ (resp. clique). The value $\t_v$ for a vertex $v$ is displayed in red. 
\label{reduction-nd}}
\end{figure}

\noindent
{\bf Selection Gadget.}
For each  $c \in [\q]$, 
the selection gadget $\Lc$ consists of two cliques, $\Lcn$ and  $\Lcp$ of $\r$ vertices each, and one \bug\  $\Lcg$
of $\r+1$ vertices. The cliques  $\Lcn$ and $\Lcp$ are connected, and the \bug\ $\Lcg$  is connected to both $\Lcn$ and $\Lcp$.
The selection gadget $\Lc$ is connected to the rest of the graph $G'$ using only vertices from $\Lcn\cup \Lcp$.
 We set now the value $\t_v=\r$ for each vertex $v \in \Lcp \cup \Lcn \cup \Lcg$.

\noindent
{\bf Multiple Gadget.}
For each $c, d \in [q]$ with $c\neq d$, 
we create a multiple gadget $\Mcd$ 
consisting of four \bug s, $\Lcdg$ of $2\r\s+1$ vertices, $\Mcdp$ and $\Mcdn$ of $\s$ vertices each, and 
$\Mcdg$ of $\s+1$ vertices, and two cliques $\Lcdp$ and $\Lcdn$ of $2\r \s$ vertices each. 
$\Mcdg$  is connected to  $\Mcdp$ and $\Mcdn$.
$\Mcdp$ is connected to $\Lcdp$, and $\Mcdn$ is connected to $\Lcdn$. The two cliques  $\Lcdp$ and $\Lcdn$ are connected.
Finally, the \bug\  $\Lcdg$ is connected to both $\Lcdp$ and $\Lcdn$.
The rest of graph $G'$ is connected only to the cliques
$\Lcdp$ and $\Lcdn$.
We set now the value $\t_v$ of each  $v$ in $\Mcd$ as: 
\begin{equation} \label{multiple}
\t_v=\begin{cases}
{2\r\s} &{\mbox{if $v\in \Lcdn \cup \Lcdp \cup \Lcdg$ }}\\
{2\r j} &{\mbox{if $v=x_j$ where $\Mcdp=\{x_1, \ldots, x_\s\}$}} \\
{2\r j} &{\mbox{if $v=y_j$ where $\Mcdn=\{y_1, \ldots, y_\s\}$ }}\\
{\s} &\mbox{if $v\in \Mcdg$}
\end{cases}
\end{equation}


\noindent
{\bf Incidence Gadget.} For each pair of distinct  $c,d \in [q]$, we construct two incidence gadgets: $\Iccd$ (connected with the gadgets $\Lc$ and $\Mcd$) and $\Idcd$ (connected with the gadgets $\Ld$ and $\Mcd$). In the following, we present the gadget $\Iccd$, which has the same structure as the gadget $\Idcd$.
The incidence gadget $\Iccd$  has three \bug s $\Iccdp$, $\Iccdn$  and  $\Iccdg$
of $\s + 1$ vertices each.
We connect $\Iccdg$ to $\Iccdp$ and $\Iccdn$. 
Furthermore, we connect $\Iccdp$ to $\Lcp$ and 
$\Lcdp$. Similarly, we connect  $\Iccdn$ to  
$\Lcn$ and $\Lcdn$.

Recalling that there are $\s + 1$ edges in the set $\Ecd$, and 
that there are $\s+1$ vertices in $\Iccdp$ and $\Iccdn$,
we create  one-to-one correspondences  between $\Ecd$ and $\Iccdp$
and between $\Ecd$ and $\Iccdn$.
Namely, for each $j=0,\ldots  s$, we associate the  $j$-th edge $\ecdj$  in $\Ecd$ (cfr. (\ref{VE})) to a   vertex  $u_j \in \Iccdp$ and to a vertex $w_j \in \Iccdn$ (with $u_j\neq u_{j'}$ and $w_j\neq w_{j'}$, for $j\neq j'$). Moreover, if the endpoint of $\ecdj$ of color $c$ is   the $i$th vertex  $\vci$ of $\Vc$ (cfr. (\ref{VE})) then  we set the value $\t_v$ of each vertex $v$ in $\Iccd$ as: 
 \begin{equation} \label{incident}
 \t_v=\begin{cases}
{ 2\r j+i} &{\mbox{if $v=u_j$ where $\Iccdp= \{u_0,\ldots, u_{\s}\}$}} \\
{2\r (\s -j) +\r -i} &{\mbox{if $v=w_j$ where $\Iccdn= \{w_0,\ldots, w_{\s}\}$}}\\
{s} &\mbox{if $v\in \Iccdg$}\\
\end{cases}
\end{equation}
It is worth observing that the vertices in $\Iccd$-pos (respectively, $\Iccd$-neg) have different demands. Indeed,  the numbers $2\r j+i$ (respectively, $2\r (\s -j)+\r-i$) are all  different, for $0\leq i \leq r$ and $0\leq j \leq s$.


The budget $k$ is set to  $k=\q \r +   \binom{\q}{2} (2\r + 3) \s$.
%
%
%
\begin{lemma} \label{clique-reduction}
$\langle G, {\bf c}, \q\rangle$ is a {\sc yes} instance of \MQ\ iff
$\langle G', \bt,\one, k\rangle$
is a {\sc yes} instance.
\end{lemma}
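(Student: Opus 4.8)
The plan is to prove the two directions separately, with the forward direction (clique $\Rightarrow$ solution) being the routine one and the backward direction (solution $\Rightarrow$ clique) being the main obstacle. For the forward direction, suppose $G$ has a multi-colored clique $Q$; say its vertex of color $c$ is $\vci[i_c]$ and, for $c\neq d$, let $\ecdj[j_{cd}]$ be the (unique) edge of $Q$ in $\Ecd$. I would build $S\subseteq V'$ by a gadget-by-gadget prescription: from each selection gadget $\Lc$ put into $S$ all $\r$ vertices of one of the two cliques (say $\Lcn$) plus $i_c$ vertices chosen so that the ``positive'' and ``negative'' counts seen by the guard and incidence neighbors encode $i_c$; analogously, from each multiple gadget $\Mcd$ take all of $\Lcdn$ (size $2\r\s$) plus a controlled number of vertices of $\Mcdp,\Mcdn$ encoding $j_{cd}$; and from each incidence gadget take appropriate vertices. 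The point of the demands in (\ref{multiple}) and (\ref{incident}) is that a vertex $u_j\in\Iccdp$ with demand $2\r j+i$ is dominated precisely when the number of chosen $\Lcp$-vertices equals $i$ and the number of chosen $\Lcdp$-vertices corresponds to $j$ — i.e. exactly when edge $\ecdj$ is compatible with having picked $\vci$ as the color-$c$ endpoint. One checks that the total size is $\q\r + \binom{\q}{2}(2\r+3)\s = k$, and that every vertex outside $S$ meets its demand; this is a finite case check over the (constantly many types of) gadgets.

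For the backward direction, assume $S$ is a solution of size at most $k$. The core of the argument is a sequence of structural claims forcing $S$ to have the ``canonical'' shape above. First I would argue that $S$ must spend at least $\r$ vertices in each $\Lc$ and at least $(2\r+3)\s$ in each $\Mcd$-plus-incidence cluster: the guard bags $\Lcg,\Lcdg,\Mcdg,\Iccdg$ have high demand and are only reachable through the neighboring cliques/bags, so if too few of those are taken the guard vertices (none of which can profitably be put in $S$ itself, being an independent set with large demand) fail to be dominated. Because these lower bounds sum exactly to $k$, every inequality is tight, so $S$ uses exactly $\r$ vertices per selection gadget and exactly $(2\r+3)\s$ per edge cluster, and in particular $S$ contains no vertex of any guard bag and no ``wasted'' vertices. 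Next, tightness plus the demand values $\t_v=\r$ on $\Lcn\cup\Lcp\cup\Lcg$ force the $\r$ chosen vertices of $\Lc$ to lie in $\Lcn\cup\Lcp$ and, via the guard $\Lcg$ (which sees both cliques and needs $\r$), to split as ``all of one clique'' is not required — rather, writing $a_c=|S\cap\Lcp|$, the guard constraint and the clique-internal constraints pin down the admissible $a_c$; similarly an integer $b_{cd}=|S\cap\Lcdp|$ is pinned down in each multiple gadget, and the $\Mcdp,\Mcdn$ demands $2\r j$ force the chosen prefix of $\Mcdp$ (resp. $\Mcdn$) to have a size that reads off a single index $j_{cd}\in\{0,\dots,\s\}$. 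Finally — and this is the crux — the incidence gadget $\Iccd$ is the consistency check: vertex $u_j\in\Iccdp$ has demand $2\r j + i$ where $i$ is the color-$c$ endpoint index of $\ecdj$, its $S$-neighbors live in $\Lcp$ and $\Lcdp$, and since all these demands are distinct the only way $u_j$ (for the ``selected'' $j=j_{cd}$, which is the one not covered from inside the gadget) can be dominated is $a_c = i$ and $b_{cd}$ = the value encoding $j_{cd}$; the symmetric bag $\Iccdn$ with demands $2\r(\s-j)+\r-i$ enforces the complementary equalities, ruling out the degenerate solution of the two inequalities. Hence $a_c$ equals the color-$c$ endpoint index of the selected edge $\ecdj[j_{cd}]$ in $\Ecd$, for every $d$ simultaneously; that is, all the selected edges at color $c$ share the same color-$c$ endpoint $\vci[a_c]$. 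Doing this for all colors yields that $\{\vci[a_c]: c\in[\q]\}$ together with the selected edges forms a $\q$-clique in $G$, completing the reduction.

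The main obstacle is the backward direction's rigidity argument: one must verify that no ``creative'' solution — e.g. paying for a guard vertex, or unevenly distributing between $\Lcn$ and $\Lcp$, or using extra vertices in one gadget while saving in another — can sneak under the budget. I expect this to hinge on three facts that I would isolate as sub-claims: (a) the budget $k$ equals the sum of the forced per-gadget minima, so the solution is ``exact'' everywhere; (b) every guard/bag vertex has demand large enough that including it in $S$ is strictly worse than dominating it from outside, and it can only be dominated through its designated clique neighbors; and (c) within each gadget the demand pattern (especially the arithmetic-progression demands $2\r j$, $2\r j+i$, $2\r(\s-j)+\r-i$ with $0\le i\le\r$, $0\le j\le\s$, which are pairwise distinct as noted after (\ref{incident})) admits a unique feasible ``profile,'' parametrized by one index per selection gadget and one per edge slot, and the incidence gadgets force global consistency of these indices. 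Once (a)--(c) are in place the equivalence follows by bookkeeping.
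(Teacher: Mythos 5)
Your backward direction is essentially the paper's proof: (a) per-gadget lower bounds forced by the guard bags, which sum exactly to $k$ and hence hold with equality, so $S$ avoids all guards and spends exactly $\r$ vertices in $\Lcp\cup\Lcn$, $2\r\s$ in $\Lcdp\cup\Lcdn$, and $\s$ in each of $\Mcdp\cup\Mcdn$, $\Iccdp\cup\Iccdn$, $\Idcdp\cup\Idcdn$; (b) a counting argument in the multiple gadget showing $|S\cap\Lcdp|$ is a multiple of $2\r$, so that $a_c=|S\cap\Lcp|$ and $b_{cd}=|S\cap\Lcdp|=2\r j_{cd}$ read off one vertex per color and one edge per color pair; (c) the pairwise-distinct incidence demands $2\r j+i$ and $2\r(\s-j)+\r-i$, which via the two opposite inequalities from $\Iccdp$ and $\Iccdn$ force $a_c$ to equal the color-$c$ endpoint index of the selected edge. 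These are exactly the paper's steps; to turn the plan into a proof you would still have to spell out the counting in (b) and (c) (e.g., if $|S\cap\Lcdp|=2\r a+z$ with $0<z<2\r$, only $\s-1$ vertices of $\Mcdp\cup\Mcdn$ are dominated while only $\s$ of them lie in $S$, leaving one vertex neither selected nor dominated), but these are precisely the computations the paper performs.

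The genuine flaw is in your forward construction, which as written fails and is inconsistent with your own backward direction. Taking ``all $\r$ vertices of $\Lcn$ plus $i_c$ vertices'' uses $\r+i_c$ vertices in the selection gadget, so the total exceeds $k$ whenever some $i_c>0$, and with all of $\Lcn$ selected the split between $\Lcp$ and $\Lcn$ no longer encodes $i_c$, so the $\Iccdn$ vertices of demand $2\r(\s-j)+\r-i$ are not covered as intended. The same applies to ``all of $\Lcdn$'': the index $j_{cd}$ must be encoded by placing exactly $2\r j_{cd}$ vertices in $\Lcdp$ and $2\r(\s-j_{cd})$ in $\Lcdn$, not by the choice inside $\Mcdp,\Mcdn$, whose only role is to absorb the bag vertices whose demands exceed the available outside coverage (the $\s-j_{cd}$ largest-demand vertices of $\Mcdp$, the $j_{cd}$ largest of $\Mcdn$, and analogously in the incidence bags). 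The correct prescription, as in the paper, is: $i_c$ vertices of $\Lcp$ and $\r-i_c$ of $\Lcn$; $2\r j_{cd}$ of $\Lcdp$ and $2\r(\s-j_{cd})$ of $\Lcdn$; then the highest-demand vertices of the multiple and incidence bags. With your prescription the size check ``$=k$'' is false and several incidence and guard vertices remain undominated, so the forward direction must be redone along these lines.
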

We complete the proof by showing  that 
$G'$ has neighborhood diversity $O(q^2)$.
Since each bag in $G'$ is a type set in the type partition of $G'$ and,  since for each $c\in [q]$, there are two cliques and one bag in $\Lc$ and, for each $c, d \in [q]$ with $c\neq d$ there are
four bags and two cliques in $\Mcd$, and three bags in both $\Iccd$ and $\Idcd$,
we have that the neighborhood diversity of $G'$ is $3 q + 12 \binom{q}{2}$.
\qed
\begin{corollary} \label{DVD-hard-nd}
  \DVD\   is W[1]-hard with respect to neighborhood diversity. 
\end{corollary}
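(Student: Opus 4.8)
The plan is to observe that \DVD\ with all radii equal to $1$ is literally the \VD\ problem, so the W[1]-hardness of \VD\ with respect to neighborhood diversity established in Theorem~\ref{w-nd} transfers with no extra work. First I would take an arbitrary instance $\langle G,{\bf c},\q\rangle$ of \MQ\ and feed it to the reduction of Theorem~\ref{w-nd}, obtaining the \VD\ instance $\langle G',\bt,\one,k\rangle$. Then I would simply reinterpret it as the \DVD\ instance on the same graph $G'$ with demand vector $\bt$, radius vector $\bd=\one=(1,\dots,1)$, and target solution size $k$.

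Next I would check that a set $S\subseteq V'$ is a Distance Vector Dominating set of this \DVD\ instance of size at most $k$ exactly when it is a Vector Dominating set of the original \VD\ instance of size at most $k$. This is because $\Ni{G',1}{v}=\Ni{G'}{v}$ for every $v\in V'$, so the condition $|\Ni{\d_v}{v}\cap S|\ge\t_v$ for all $v\in V'\setminus S$ is word-for-word the \VD\ condition $|\Ni{G'}{v}\cap S|\ge\t_v$; in particular the admissibility requirement $\t_v\le|\Ni{\d_v}{v}|$ in the definition of \DVD\ coincides with the one already satisfied by the \VD\ instance. By Lemma~\ref{clique-reduction}, this happens iff $G$ contains a multi-colored clique of size $\q$. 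Finally, since the graph $G'$ is unchanged, its neighborhood diversity is still $3\q+12\binom{\q}{2}=O(\q^2)$, so the parameter of the produced instance is bounded by a function of the parameter $\q$ of \MQ, and the whole construction runs in polynomial time; this is exactly a parameterized reduction from the W[1]-hard problem \MQ.

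There is essentially no obstacle here: the only point that needs a moment's attention is confirming that the $\bd=\one$ restriction of \DVD\ is syntactically identical to \VD\ (admissibility condition on the demands included), which is immediate because the radius-$1$ neighborhood is the ordinary neighborhood. The same observation moreover propagates the hardness to every parameter that is bounded on graphs of bounded neighborhood diversity, in particular modular-width and clique-width, so I would state the corollary (and could note this stronger consequence) directly from Theorem~\ref{w-nd}.
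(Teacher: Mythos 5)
Your proposal is correct and matches the paper's intent exactly: the corollary follows immediately from Theorem~\ref{w-nd} because \DVD\ restricted to $\bd=\one$ is syntactically the \VD\ problem, so the same reduction and the same bound $3\q+12\binom{\q}{2}$ on the neighborhood diversity of $G'$ apply verbatim. The paper leaves this as an immediate consequence without a written proof, and your careful check of the $\bd=\one$ identification is precisely the (routine) verification it relies on.
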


\section{ FPT algorithms for graphs of bounded Modular-width } 
\label{sec-mw}
The notion of modular decomposition of graphs was introduced by Gallai in \cite{Gallai}, as a tool to define hierarchical decompositions of graphs.
{A {\em module} of a graph  $G = (V, E)$  is  a subgraph $G[M]$  induced by a  set $M \subseteq V$ 
such that all the vertices of $M$ share the same neighbors in $V\setminus  M$.}
The \textit{modular-width} parameter has been proposed in \cite{GLO}. 
\begin{definition}\label{mw}
{
Consider graphs obtainable by using (in any order or number) the following operations. 
\begin{itemize} 
\item (O1) \  The creation of an isolated vertex.
\item (O2) \ $G_1\oplus G_2$, called {\em disjoint union} of two graphs: $G_1\oplus G_2$ is the graph with vertex set $V (G_1 ) \cup V ( G_2 )$ and edge set $E( G_1 )\cup E( G_2 )$.
\item (O3) \ $G_1\otimes G_2$, called {\em complete join}: 
$G_1\otimes G_2$ is the graph with vertex set $V (G_1 )\cup V ( G_2 )$ and edges $E( G_1 )\cup E( G_2 )\cup \{ (u, w) \mid  u \in V ( G_1 ),\,  w\in  V(G_2) \}$.
\item (O4) 
$H(G_1,\ldots,G_\n)$, the substitution of the vertices $v_1,\ldots,v_\n$ of a graph $H$ by the graphs (modules) $G_1,\ldots,G_\n$: 
$H(G_1,\ldots,G_\n)$ is the graph with vertices $\bigcup_{1\leq \ell\leq \n} V(G_\ell)$ and edges
$\bigcup_{1\leq \ell \leq  \n} E(G_\ell)\cup \{(u, w) \mid u \in V (G_i), \ w \in
V (G_j ), \  (v_i, v_j) \in E(H) \}. $
\end{itemize}} 
\end{definition}

The \textit{modular-width} of a
graph $G$, denoted $\mw(G)$, is the least integer $\n$ such that $G$ can be obtained by using only the operations (O1)--(O4) (in any number and order) and where each operation (O4) has at most $\n$ modules.
A hierarchical decomposition of $G$ that is an expression using only the operations (O1)--(O4)  of width $\mw(G)$ can be constructed in linear time  \cite{TCH+}.\\
Notice that any module $\G_i$ of $\G =
\H(\G_1,\ldots,\G_\n)$  is such that all the vertices in $V(\G_i)$  have the same neighborhood in $V(G) \setminus V(G_i)$; that is,  for each vertex $u \in V(\G) \setminus V(G_i)$ either $V(G_i) \subseteq N_\G(u)$ or $V(G_i) \cap N_\G(u) = \emptyset$. Moreover,  operations (O2) and (O3) are special cases of (O4) for $H$ being $K_2$ or its complement. Hence, any graph $\G$ can be written as $\G =
\H(\G_1,\ldots,\G_\n)$ with $\n \leq \max\{2,\mw(\G)\}$.
Consider then the parse-tree of an expression describing   $\G$, according to the
operations (O1)--(O4). 
{The leaves of the parse-tree are the isolated vertex modules, created  by  (O1) and representing the vertices in $G$.}
Any internal vertex in the parse-tree is obtained through  (O2)-(O4):
Each such an operation corresponds to  a vertex  $\H(\G_1, \ldots,\G_\n)$ with $\n\geq 2$ children $\G_1, \ldots,\G_\n$.
%
{ 
\begin{observation}\label{fact1}
If $\G=\H(\G_1,\ldots,\G_\n)$  is a connected undirected graph, then $\dist_\G(u,v) \leq 2$, for each  $u,v \in V(\G_i)$ for any $i \in [\n]$.  
\end{observation}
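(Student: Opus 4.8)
The plan is to prove Observation~\ref{fact1} directly from the structure of the operation (O4). First I would fix the connected graph $\G = \H(\G_1,\ldots,\G_\n)$ together with an index $i \in [\n]$ and two (not necessarily distinct) vertices $u, v \in V(\G_i)$. If $u = v$ the claim is trivial, so assume $u \neq v$. The key observation to exploit is that, by the definition of (O4), every vertex $w \in V(\G) \setminus V(\G_i)$ is adjacent either to all of $V(\G_i)$ or to none of it; in particular, $w$ is adjacent to $u$ if and only if it is adjacent to $v$. So the only way $u$ and $v$ could fail to have a common neighbor outside $\G_i$ is if no vertex of $V(\G) \setminus V(\G_i)$ is adjacent to $V(\G_i)$ at all.

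Next I would split into two cases. \textbf{Case 1: some vertex $w \in V(\G) \setminus V(\G_i)$ is adjacent to $V(\G_i)$.} Then, as noted, $w$ is adjacent to both $u$ and $v$, so $u$--$w$--$v$ is a walk of length at most $2$, giving $\dist_\G(u,v) \le 2$. \textbf{Case 2: no vertex outside $V(\G_i)$ is adjacent to $V(\G_i)$.} By the edge set of (O4), this means there is no edge of $\H$ incident to $v_i$, i.e.\ $v_i$ is isolated in $\H$; hence there are no edges between $V(\G_i)$ and the rest of $\G$, and since $\G$ is connected this forces $V(\G) = V(\G_i)$, i.e.\ $\G = \G_i$. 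But then $\G_i$ itself must be connected, and moreover, since $\G$ is built by the operations (O1)--(O4) and $\G = \G_i$ is a proper child in the parse-tree, $\G_i$ is in turn of the form $\H'(\G'_1,\ldots,\G'_{\n'})$ (or a single vertex, which is excluded as $u \neq v$); one can then recurse, but a cleaner route is to argue that a connected graph of the form (O4) always has diameter at most $2$ when restricted to a single module \emph{or} the whole graph is such a module, and apply the classical fact that a connected graph obtained by a join-type operation has small diameter. The cleanest phrasing: in Case 2 we have $\G = \G_i = \H'(\G'_1,\ldots)$ connected, so $\H'$ is connected and has at least two vertices; pick the child $\G'_j$ containing $u$ — then either $v \in V(\G'_j)$ and we recurse on a strictly smaller subtree, or $v \notin V(\G'_j)$ and by connectedness of $\H'$ there is a child $\G'_h$ with $v_h$ adjacent to $v_j$ in $\H'$, so every vertex of $V(\G'_h)$ is adjacent to $u$; repeating from $v$'s side, a common neighbor (or a length-$2$ path) is found.

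Since the recursion in Case 2 strictly decreases the size of the parse subtree, it terminates; the base case is a parse-subtree that is a single (O4) node all of whose children are leaves, i.e.\ $\H'$ itself with $V(\G'_\ell) = \{v_\ell\}$, where connectedness of $\H'$ gives $\dist \le 2$ between any two of its vertices directly. I expect the main obstacle to be stating Case 2 cleanly without an unwieldy induction: the subtlety is that ``$\G = \G_i$'' pushes the problem one level down the parse-tree, and one must make sure the hypotheses (connectedness, and that we are looking at two vertices inside a single module of some (O4) application) are preserved on the way down. A convenient way to package this is to prove the slightly stronger statement ``every connected graph expressible by (O1)--(O4) in which the top operation is (O2), (O3), or (O4) has diameter $\le 2$,'' by induction on the parse-tree, and then derive Observation~\ref{fact1} as the special case where $u,v$ lie in the same top-level module $\G_i$ — either $\G_i$ is nonadjacent to everything (so $\G = \G_i$ and the stronger statement applies to $\G_i$) or it has an outside neighbor and Case~1 finishes immediately.
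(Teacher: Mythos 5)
Your Case~1 is exactly the right argument, and under the paper's conventions it is the \emph{whole} proof: an (O4) node of the parse-tree has $\n\ge 2$ nonempty modules, so if $\G=\H(\G_1,\ldots,\G_\n)$ is connected then $\H$ is connected and the vertex $v_i$ of $\H$ has a neighbour $v_j$; every vertex $w\in V(\G_j)$ is then adjacent to all of $V(\G_i)$, hence to both $u$ and $v$, giving $\dist_\G(u,v)\le 2$. Your Case~2 (no vertex outside $V(\G_i)$ adjacent to $V(\G_i)$) forces $V(\G)=V(\G_i)$, i.e.\ $\n=1$; with $\n\ge 2$ it would contradict connectedness, so it simply cannot arise and needs no further argument.

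The genuine gap is in how you try to dispose of Case~2. The ``stronger statement'' you propose --- every connected graph whose top operation is (O2), (O3) or (O4) has diameter at most $2$ --- is false: any connected graph $G$ can be written as a top-level (O4) by substituting single vertices into itself (e.g.\ the path $P_4$ is $P_4(K_1,K_1,K_1,K_1)$), and its diameter can be arbitrary. Correspondingly, the concrete step that fails in your recursion is ``repeating from $v$'s side, a common neighbour (or a length-$2$ path) is found'': once $u\in V(\G'_j)$ and $v\in V(\G'_h)$ lie in \emph{different} children, $\dist_\G(u,v)$ equals $\dist_{\H'}(v_j,v_h)$, which can exceed $2$; the connectedness of $\H'$ only gives $v_j$ \emph{some} neighbour, not one whose module contains $v$. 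At that point you are no longer proving the observation (two vertices in a common module) but the false diameter claim, so the induction collapses. The fix is not a recursion but a remark: the observation is to be read for decompositions with $\n\ge 2$ (as for every (O4) node of the parse-tree, which is where the paper applies it); for $\n=1$ the statement would assert diameter $\le 2$ of an arbitrary connected graph and is false, while for $\n\ge 2$ your Case~1 already finishes the proof.
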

}


\subsection{\RD\ with parameter \texorpdfstring{$\mw$}{mw}}\label{Algo} 
\begin{algorithm}[b!]
 \SetCommentSty{footnotesize}
  \SetKwInput{KwData}{Input}
 \SetKwInput{KwResult}{Output}
 \DontPrintSemicolon
  
\caption{  {\sc RD}$(\G=\H(\G_1, \ldots, \G_\n),  \bd, S_H$) \label{algdvd-1}}
\setcounter{AlgoLine}{0}
$S=\emptyset$\\
\lFor{\textbf{each} $i \in [\n]$ }{
              $\ell_i= \min_{j \in S_H\setminus\{i\}}\dist_H(i,j)$
              }
 \For{\textbf{each}  $i \in [\n] \setminus S_H$ }{
              \lIf{$\ell_i > \min_{v \in V(\G_i)}\d_v$}{
        \Return $(R{=}\false, S{=}\emptyset)$}
                    }
     \For{\textbf{each}  $i  \in S_H$ }{
           $A_i=\{v \in V(\G_i) \ | \ 1=\d_v < \ell_i\}$\\
              \lIf{$A_i = \emptyset $}{ $S=S \cup \{u_i\},$ where  $u_i$ is any vertex in $ V(\G_i)$}
              \Else{
              \lIf{$ \bigcap_{v \in A_i} \Ni{G_i}{v} = \emptyset$ 
              }{    \Return $(R{=}\false, S{=}\emptyset)$}
              \lElse{ $S=S \cup \{u_i\},$ where  $u_i$ is any vertex in $\bigcap_{v \in A_i} \Ni{G_i}{v} $}
                    }
                    }
        \Return $(R=\true, S)$
  \end{algorithm}

This section is devoted to proving the following result.
\begin{theorem} \label{RD-mw}
  {\RD\  can be solved in time $O(\mw \;2^\mw \;  n).$}
\end{theorem}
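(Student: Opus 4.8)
The plan is to solve \RD\ by a bottom-up dynamic program on the parse tree of a modular decomposition of $G$, computed in linear time \cite{TCH+}. Everything hinges on two facts about distances in a connected graph $\G=\H(\G_1,\ldots,\G_\n)$. The first is Observation~\ref{fact1}: any two vertices of the same module $\G_i$ are at distance at most $2$ in $\G$. The second, which I would state as a short lemma, is that $\dist_\G(u,w)=\dist_H(i,j)$ whenever $u\in V(\G_i)$ and $w\in V(\G_j)$ with $i\neq j$: ``$\ge$'' holds because any $u$-$w$ walk in $\G$ projects to an $i$-$j$ walk in $H$ of no larger length (intra-module steps collapse, inter-module steps become edges of $H$), and ``$\le$'' holds because a shortest $i$-$j$ path in $H$ lifts to a $u$-$w$ path of the same length using arbitrary representatives of the intermediate modules. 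Since every parse-tree node is a module of $G$, these facts say that an embedded module $M$ sees the rest of $G$ only through one number: the common distance $D$ from the vertices of $M$ to the nearest selected vertex outside $M$; moreover only values $D\le\mathrm{diam}(\G)\le\mw$ are relevant, everything larger behaving like $D=\infty$.

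For every parse-tree node $t$, with induced subgraph $\G^t$ on module $M_t$, I would tabulate two functions of $D\in\{1,\ldots,\mw\}\cup\{\infty\}$: $g_t[D]$, the minimum size of $X\subseteq M_t$ such that every $v\in M_t$ with $\d_v<D$ is in $X$ or has a member of $X$ within distance $\d_v$; and $g^+_t[D]$, the same quantity under the extra requirement $X\neq\emptyset$. At a leaf (a single vertex $v$), $g_t[D]=0$ if $\d_v\ge D$ and $1$ otherwise, while $g^+_t[D]=1$. The answer to \RD\ is $g_{\mathrm{root}}[\infty]$.

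The step merging the children of $\G^t=\H(\G_1,\ldots,\G_\n)$, with $\n\le\mw$, is what Algorithm~\ref{algdvd-1} performs in the case $M_t=V(G)$, $D=\infty$; the general version fixes $D$, guesses the subset $S_H\subseteq[\n]$ of children that will meet $X$ ($2^\n\le2^\mw$ guesses), and, from the pairwise distances in the small graph $H$, sets $\ell_i=\min_{j\in S_H\setminus\{i\}}\dist_H(i,j)$ — capped at $2$ when $M_t\neq V(G)$, since then $M_t$ has outside neighbours — and $D_i=\min(D,\ell_i)$. A child $i\notin S_H$ contributes $0$ and is feasible iff $D_i\le\min_{v\in V(\G_i)}\d_v$ (this is line~4 of Algorithm~\ref{algdvd-1}); a child $i\in S_H$ contributes $g^+_{\G_i}[D_i]$ (for a single-vertex child this is $1$, the selected vertex being that vertex itself, matching the $A_i=\emptyset$ branch; otherwise it is computed recursively). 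One sets $g_t[D]=\min_{S_H\ \mathrm{feasible}}\sum_{i\in S_H}g^+_{\G_i}[D_i]$, and $g^+_t[D]$ is the same minimum over $S_H\neq\emptyset$. Correctness is proved by induction on the tree, in both directions: the set assembled from optimal child sets is a valid \RD\ solution for $\G^t$ under external budget $D$ — a radius-$\ge2$ vertex of an active child lies within distance $2$ of that child's selected vertex (Observation~\ref{fact1}); a radius-$1$ vertex of a child is dominated either from an $H$-adjacent active sibling (precisely the case $\ell_i=1$, where it is not even counted, since then $\d_v=1\not<\ell_i$) or, recursively, from inside its own child; and every vertex of an inactive child has $\d_v\ge D_i$, hence is reached externally — and conversely, for an optimal solution $X^\star$ of $\G^t$ one takes $S_H=\{i:X^\star\cap V(\G_i)\neq\emptyset\}$ and verifies, via the two distance facts, that $X^\star\cap V(\G_i)$ witnesses $g^+_{\G_i}[D_i]\le|X^\star\cap V(\G_i)|$.

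I expect the main obstacle to be exactly this correctness argument for radius-$1$ vertices. Unlike a radius-$\ge2$ vertex, a radius-$1$ vertex requires a genuine neighbour in $X$, and one must show that the single selected vertex forced into an active module may legitimately play three roles at once — dominating the radius-$1$ vertices of its own module that it is adjacent to, satisfying the radius-$\ge2$ vertices of that module, and being the near-selected vertex that lowers $D_j$ for a sibling — and that domination of a radius-$1$ vertex across a module boundary is accounted for correctly; this is the reason the recursion must carry the ``$X\neq\emptyset$'' variant $g^+$ and the reason $A_i$ in Algorithm~\ref{algdvd-1} is cut at $\d_v<\ell_i$. For the running time: each node stores $O(\mw)$ entries, and computing one entry runs over $2^{\n_t}$ subsets with $O(\n_t)$ work each (forming the $D_i$ and adding up table look-ups), i.e. $O(\mw\,\n_t 2^{\n_t})$ per node; since $\n_t 2^{\n_t}\le\mw 2^\mw$ and $\sum_t\n_t=O(n)$ over the parse tree ($n$ leaves, at most $n-1$ internal nodes), a short exchange argument gives $\sum_t\n_t 2^{\n_t}=O(2^\mw n)$, hence total time $O(\mw\,2^\mw\,n)$; the linear-time decomposition and the $O(\mw^2 n)$ cost of all the small all-pairs-distance computations are absorbed.
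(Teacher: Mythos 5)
Your argument is correct in outline but follows a genuinely different route from the paper. The paper never recurses into the parse tree: it first proves a structural exchange lemma (Lemma \ref{one}) showing that some optimum solution contains at most one vertex from each top-level module of $\G=\H(\G_1,\ldots,\G_\n)$, and then simply enumerates the $2^{\n}$ subsets $S_H\subseteq[\n]$ of modules hosting a (single) selected vertex, checking feasibility with Algorithm \ref{algdvd-1}: vertices of unselected modules need $\d_v\ge\ell_i$, vertices with $\d_v\ge 2$ in a selected module are automatically dominated by Observation \ref{fact1}, and the only genuine constraint is that the radius-$1$ vertices of a selected module with $\ell_i\ge 2$ admit a common neighbour inside that module, which the single selected vertex is then taken to be. You instead replace the structural lemma by an exact bottom-up dynamic program over the whole parse tree, with tables $g_t[D],\,g^+_t[D]$ indexed by the distance to the nearest external selected vertex; this is heavier machinery (you must fix the semantics of ``within distance $\d_v$'' as distance in $G$, whence the cap at $2$ below the root, and you allow more than one selected vertex inside a module, which Lemma \ref{one} shows is never needed at the top level), but it is self-contained, avoids both the exchange argument and the common-neighbour trick, and would extend more readily to variants where one vertex per module does not suffice. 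Two details in your sketch need care. First, the claim of $O(\n_t)$ work per subset presupposes that the values $\ell_i=\min_{j\in S_H\setminus\{i\}}\dist_H(i,j)$ are maintained incrementally along the subset enumeration (e.g., extending each subset by one element and keeping the two smallest distances per index $i$); naive recomputation costs $O(\n_t\,|S_H|)$ per subset and gives $O(\mw^2 2^{\mw} n)$ rather than the stated $O(\mw\, 2^{\mw} n)$ (the paper sidesteps this because it enumerates subsets only once, at the root, and tolerates $O(\n\, n)$ per subset). Second, when a quotient below the root is disconnected (a disjoint-union node), $\dist$-values equal to $\infty$ must also be capped at $2$, which is legitimate precisely because every proper module of the connected input graph has an external common neighbour; this should be said explicitly, since it is what makes your ``module seen only through one number $D$'' claim valid at every level.
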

\begin{lemma} \label{one}
Let $\G=\H(\G_1,\ldots,\G_\n)$. 
There exists a solution $S$ for the instance
$\langle G, \bu, \bd\rangle$  of the \RD\ problem 
such that $|S \cap V(\G_i)| \leq 1$, for each $i \in [\n]$.
\end{lemma}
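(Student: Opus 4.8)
The plan is to take a minimum-size $R$-dominating set of $\G$ and ``un-clump'' it by repeated local exchanges: as long as some module $\G_i$ contains two chosen vertices, I delete one of them and, when necessary, add a single vertex of a neighbouring module, keeping the size unchanged and preserving feasibility. To make this well-founded I let $S$ be a solution (that is, a minimum-size $R$-dominating set of $\G$) for which the quantity $\Phi(S)=\sum_{i=1}^{\n}\max\{0,\,|S\cap V(\G_i)|-1\}$ is as small as possible, and then prove by contradiction that $\Phi(S)=0$, which is exactly the assertion of the lemma. Throughout we use that $\n\ge 2$, so each $V(\G_i)$ is a proper non-empty subset of $V(\G)$.

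I would first record two elementary facts about the decomposition $\G=\H(\G_1,\ldots,\G_\n)$. (i) Since $\G$ is connected and $\n\ge 2$, every module $\G_i$ has a non-empty ``outer boundary'' $N_i=\Ni{\G}{V(\G_i)}\setminus V(\G_i)$; by the module property each vertex of $N_i$ is adjacent to \emph{all} of $V(\G_i)$, and if some $w\in N_i$ belongs to module $\G_j$ then in fact $V(\G_j)\subseteq N_i$. (ii) For every $z\notin V(\G_i)$ and all $u,v\in V(\G_i)$ one has $\dist_\G(z,u)=\dist_\G(z,v)$: on a shortest $z$--$u$ path, let $y$ be the last vertex before the path first enters $V(\G_i)$; then $y\in N_i$, so $y$ is also adjacent to $v$, and rerouting through $v$ gives a $z$--$v$ walk no longer than the path. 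Fact (ii) expresses that all vertices of a module are interchangeable as dominators of anything outside it, and, combined with Observation~\ref{fact1} (pairwise distance $\le 2$ within a module), also for dominating the radius-$\ge 2$ vertices inside it; so the only reason a module could seem to ``need'' two chosen vertices is to dominate its own radius-$1$ vertices, and even this can be done from outside.

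For the exchange, suppose $|S\cap V(\G_i)|\ge 2$ and fix distinct $u,v\in S\cap V(\G_i)$. If $N_i\cap S\neq\emptyset$, pick $w\in N_i\cap S$ and put $S'=S\setminus\{v\}$; otherwise pick any $w\in N_i$ and put $S'=(S\setminus\{v\})\cup\{w\}$. The crux is that $S'$ is still an $R$-dominating set: a vertex $z\notin S'$ that is not already dominated by $S\setminus\{v\}\subseteq S'$ must be either $z=v$ or $z\notin S$ with $v$ its only dominator in $S$; then, if $z\in V(\G_i)$, it is adjacent to $w\in S'$, so $\dist_\G(z,S')\le 1\le \d_z$ (note $\d_z\ge 1$, since $\t_z=1\le|\Ni{\d_z}{z}|$), while if $z\notin V(\G_i)$ then by (ii) $\dist_\G(z,u)=\dist_\G(z,v)\le\d_z$ and $u\in S'$; every other vertex is either untouched or only helped by the addition of $w$. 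In the first case $|S'|=|S|-1$, contradicting that $S$ has minimum size. In the second, $w\notin S$, so $|S'|=|S|$, and one chosen vertex has moved out of $\G_i$ (which still retains $u$) into a module $\G_j$ with $V(\G_j)\subseteq N_i$, whence $V(\G_j)\cap S=\emptyset$; therefore the only changes to $\Phi$ are a decrease by $1$ at index $i$ and no change at index $j$, so $\Phi(S')=\Phi(S)-1$, contradicting the choice of $S$. Either way we reach a contradiction, so $\Phi(S)=0$ and the lemma follows.

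I expect the only genuine obstacle to be the feasibility verification for $S'$ — in particular, being sure that deleting $v$ damages neither a vertex outside $\G_i$ (handled by fact (ii) together with the surviving vertex $u\in S'$) nor a radius-$1$ vertex inside $\G_i$ (handled by $w\in N_i$, adjacent to all of $V(\G_i)$). The rest — the two structural facts and the bookkeeping with $\Phi$ — is routine; the one subtlety worth flagging is that a new vertex $w$ is introduced only when $N_i\cap S=\emptyset$, which is precisely what ensures $w$ lands in a module that previously contained no chosen vertex, so that $\Phi$ strictly decreases rather than merely staying put.
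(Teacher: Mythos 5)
Your proof is correct and takes essentially the same route as the paper's: an exchange argument based on the module properties (every outside neighbour of $V(\G_i)$ is adjacent to all of $V(\G_i)$, and all vertices of $V(\G_i)$ are equidistant from any outside vertex), which either deletes one of two selected vertices in a module — contradicting minimality of $|S|$ — or shifts it into a neighbouring module of the same size solution. Your explicit potential $\Phi$ simply makes rigorous the termination that the paper handles informally by ``repeating the same argument.''
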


By exploiting Lemma \ref{one}, our algorithm proceeds by considering all the subsets $S_H \subseteq[\n]$, ordered by size, and checking whether it is possible to find a vertex $u_i \in V(\G_i)$ for each $i \in S_H$ such that $S=\{u_i\ | \ i \in S_H\}$ is a solution for  the instance
$\langle G, \bu, \bd\rangle$ of the \RD\ problem. Algorithm \ref{algdvd-1} implements this check. 

\begin{lemma}\label{lemma_3}
Given any set $S_H \subseteq [\n]$, 
 let $(R,S)$ be the pair returned by Algorithm  {\sc RD}$(\G=\H(\G_1, \ldots, \G_\n),  \bd, S_H$). If $R=\true$ then $S$ is a solution for the instance $\langle \G, \bu, \bd\rangle$ of the \RD\ problem, otherwise the problem has no solution with exactly one vertex selected from each $V(\G_i)$ with $i \in S_H$.
\end{lemma}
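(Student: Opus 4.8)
The plan is to push everything through the structure of $\G=\H(\G_1,\ldots,\G_\n)$ and then run a short case analysis. First I would record the distance facts I will use. Since $\G$ is connected, Observation~\ref{fact1} gives $\dist_\G(u,v)\le 2$ whenever $u,v\in V(\G_i)$ for some $i$; and for $u\in V(\G_i)$, $w\in V(\G_j)$ with $i\ne j$ one has $\dist_\G(u,w)=\dist_H(i,j)$, since any walk from $u$ to $w$ in $\G$ projects to a walk of no larger length in $H$ between $v_i$ and $v_j$ (consecutive vertices lie in the same module or in $H$-adjacent modules), while a shortest path from $v_i$ to $v_j$ in $H$ lifts to a $\G$-path of the same length because operation~(O4) puts a complete join between $H$-adjacent modules. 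I would also use the local description $\Ni{\G}{v}=\Ni{\G_i}{v}\cup\bigcup\{V(\G_k):(v_i,v_k)\in E(H)\}$ for $v\in V(\G_i)$, and the remark that the value $\ell_i$ is exactly $\dist_\G\!\big(v,\bigcup_{j\in S_H\setminus\{i\}}V(\G_j)\big)$, the same for every $v\in V(\G_i)$.

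For the implication ``$R=\true\Rightarrow S$ is a solution'': $R=\true$ means (a)~for every $i\in[\n]\setminus S_H$ and every $v\in V(\G_i)$, $\ell_i\le\d_v$ (so in particular $S_H\ne\emptyset$ in the case used below), and (b)~for every $i\in S_H$ the algorithm put into $S$ a vertex $u_i\in V(\G_i)$ which, when $A_i\ne\emptyset$, is a common $\G_i$-neighbour of every vertex of $A_i$. Take $v\in V(\G)\setminus S$, say $v\in V(\G_i)$. If $i\notin S_H$, choose $j\in S_H$ with $\dist_H(i,j)=\ell_i$; then $\dist_\G(v,u_j)=\dist_H(i,j)=\ell_i\le\d_v$. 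If $i\in S_H$ and $\d_v\ge 2$, then $u_i\in V(\G_i)$ yields $\dist_\G(v,u_i)\le 2\le\d_v$. If $i\in S_H$ and $\d_v=1$: when $\ell_i=1$ there is $j\in S_H\setminus\{i\}$ with $(v_i,v_j)\in E(H)$, so $\dist_\G(v,u_j)=1$; when $\ell_i>1$ we have $v\in A_i$, hence $u_i\in\Ni{\G_i}{v}$ and $\dist_\G(v,u_i)=1$. In all cases $v$ is dominated, so $S$ is an \RD\ solution, and it contains exactly one vertex of each $V(\G_i)$ with $i\in S_H$ and none elsewhere.

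For the implication ``$R=\false\Rightarrow$ there is no \RD\ solution $S'$ with exactly one vertex in each $V(\G_i)$, $i\in S_H$, and none elsewhere'', I would argue by contradiction, assuming such an $S'=\{u'_j:j\in S_H,\ u'_j\in V(\G_j)\}$ exists, and split on which test produced $\false$. If it was the first loop, there are $i\notin S_H$ and $v^\star\in V(\G_i)$ with $\d_{v^\star}<\ell_i$; as $v^\star\notin S'$ and every vertex of $S'$ lies in an $S_H$-module distinct from $\G_i$, we get $\dist_\G(v^\star,S')\ge\ell_i>\d_{v^\star}$, so $v^\star$ is not dominated, a contradiction. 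If it was the second loop, there is $i\in S_H$ with $A_i\ne\emptyset$ and $\bigcap_{v\in A_i}\Ni{\G_i}{v}=\emptyset$. Fix $v\in A_i$; then $\d_v=1$ and $\ell_i>1$, so no $S_H$-module is $H$-adjacent to $\G_i$ and, by the local description of $\Ni{\G}{v}$, the only vertex of $S'$ that can lie in $\{v\}\cup\Ni{\G}{v}$ is $u'_i$. Hence, whether $v$ lies in $S'$ or is dominated by $S'$, we obtain $u'_i\in\{v\}\cup\Ni{\G_i}{v}$; letting $v$ range over $A_i$ gives $u'_i\in\bigcap_{v\in A_i}\big(\{v\}\cup\Ni{\G_i}{v}\big)$, the sought contradiction with $\bigcap_{v\in A_i}\Ni{\G_i}{v}=\emptyset$.

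The step I expect to be the real obstacle is this last one: the algorithm tests emptiness of the \emph{open} common neighbourhood $\bigcap_{v\in A_i}\Ni{\G_i}{v}$, whereas the obstruction I derived for $S'$ is emptiness of the \emph{closed} common neighbourhood $\bigcap_{v\in A_i}\big(\{v\}\cup\Ni{\G_i}{v}\big)$; these coincide except in the boundary case where some vertex of $A_i$ is $\G_i$-adjacent to all the others (in particular, a single vertex of $A_i$ isolated inside $\G_i$). To finish cleanly I would need either to argue that this configuration cannot be realised by a solution of the prescribed shape at this point of the search, or to observe that such a forced vertex is recovered by the outer procedure through another set $S_H$. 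Everything else is routine book-keeping over the modular decomposition, driven by Observation~\ref{fact1} and the complete-join structure of~(O4).
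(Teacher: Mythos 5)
Your forward direction ($R=\true$) is correct and is essentially the paper's own argument: modules outside $S_H$ are handled by the line-4 test, vertices with $\d_v\ge\ell_i$ by the module attaining $\ell_i$, vertices with $\d_v\ge 2$ by Observation~\ref{fact1}, and the vertices of $A_i$ by the common neighbour selected in line 9. The first half of your converse (failure in the first loop) also coincides with the paper's proof.

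The step you flag at the end is, however, a genuine gap, and it cannot be closed as the statement stands. From a solution $S'$ of the prescribed shape you can only conclude $u'_i\in\bigcap_{v\in A_i}\bigl(\{v\}\cup \Ni{\G_i}{v}\bigr)$, while the algorithm rejects on $\bigcap_{v\in A_i} \Ni{\G_i}{v}=\emptyset$; the two conditions differ exactly when some vertex of $A_i$ is adjacent in $\G_i$ to all the others (in particular when $A_i$ is a singleton with no neighbour inside its module), and in that case $u'_i$ may be that vertex itself, which then needs no domination. This boundary case is realizable, so neither of your proposed repairs can succeed for the lemma in isolation: take $\G$ to be the path $a\,b\,c\,d$ with singleton modules (so $\H=P_4$), radii $\d_a=\d_b=\d_d=1$, $\d_c=2$, and $S_H$ consisting of the modules of $a$ and $d$. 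Then $\ell$ for the module of $a$ equals $3$, $A=\{a\}$ has empty open neighbourhood inside its module, and the algorithm returns $\false$; yet $\{a,d\}$ is an \RD\ solution containing exactly one vertex from each prescribed module ($b$ is dominated by $a$, and $c$, having radius $2$, by $d$). For comparison, the paper's proof of this direction asserts that the vertices of $A_i$ cannot be dominated from outside $V(\G_i)$ and have no common neighbour, hence ``cannot be dominated by exactly one vertex in $V(\G_i)$''; this tacitly assumes that every vertex of $A_i$ must be dominated and thus also ignores the case $u'_i\in A_i$, so your difficulty is not an omission on your side but a real issue with the statement/test (the natural fix is to test emptiness of $\bigcap_{v\in A_i}\bigl(\{v\}\cup \Ni{\G_i}{v}\bigr)$ in line 9, after which your closed-neighbourhood argument completes the converse). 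As submitted, though, your proof of the second implication is incomplete.
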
 
Now we evaluate the running time of our algorithm. 
First of all,  we can obtain $ \min_{v \in V(\G_i)}\d_v$  for $i \in [p]$ in time $O(n)$. Then,  for at most each $S_H \subseteq[\n]$, we use algorithm  {\sc RD}$(\G,  \bd, S_H$) to verify if a solution $S$  with exactly one vertex selected from each $V(\G_i)$ with $i \in S_H$ exists. Considering that Algorithm {\sc RD}$(\G,  \bd, S_H$) requires time $O(\n \; n)$ and that the number of modules of $G$ is  $\n \leq \mw$, overall we have time complexity $O(\mw \;2^\mw \;  n).$ \qed

\subsection{\VD\  with parameters \texorpdfstring{$\mw$}{\mw} and the solution size \texorpdfstring{$k$}{k}} %
\begin{algorithm}[b!]
\SetCommentSty{footnotesize}
\SetKwInput{KwData}{Input}
\SetKwInput{KwResult}{Output}
\DontPrintSemicolon
\caption{{\sc VD-MW}$(\hG=\hH(\hG_1, \ldots, \hG_\hn), \hbt,   \hb$) \label{alg-vd-mw}}

\setcounter{AlgoLine}{0}
\If(\tcp*[h]{ $\hH$ is a single vertex graph}) {$\hG=\hH(\hG_1)=(\{v\},\emptyset)$ }{
          \lIf{$\hb=0 \land  \t_v\geq 1$ }{\Return $(R=\false,\hS=\emptyset)$}
         \lIf{$\hb=0 \land \t_v=0$}{\Return $(R=\true,\hS=\emptyset)$}
       \lIf{$\hb=1$}{\Return $(R=\true,\hS=\{v\})$}
     }
    \Else{
      \For{\textbf{each}  $(s_1, \ldots, s_\hn) \mid \sum_{i=1}^\hn s_i=\hb$  and $0 \leq s_i \leq \min \{\hb,|V(\hG_i)|\}$}{
             \For{$i=1, \ldots, \hn$}{
                 \lFor{\textbf{each}  $v \in V(\hG_i)$}{
                      $\htt'_{v} = \max \{0, \ \htt_{v} - \sum_{j \mid (i,j)\in E(\hH)}s_j \}$
                       } 
                  $(R_i,S_i)=${\sc VD-MW}$(\hG_i, \hbt',  s_i)$
                   }
        \lIf{$\bigwedge_{i=1}^\hn R_i =\true$}{\Return $(R =\true, \; \hS = \bigcup_{i=1}^\hn S_i)$} 
    }       
    \Return $(R=\false,\hS=\emptyset)$
   }  
  \end{algorithm}

This section is devoted to proving the following result.
\begin{theorem} \label{VD-mw-k} 
   \VD\  can be solved in time {$O(\mw \; k(k+1)^{\mw} \;n^2)$.}
\end{theorem}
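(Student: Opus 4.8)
The plan is to establish two things about the recursive procedure {\sc VD-MW} of Algorithm~\ref{alg-vd-mw}: that, invoked as {\sc VD-MW}$(\hG,\hbt,k)$ on a modular decomposition $\hG=\hH(\hG_1,\ldots,\hG_\hn)$ of $G$ in which every substitution has $\hn\le\mw$ modules, it correctly decides whether $G$ admits a Vector Dominating set of size at most $k$ with respect to $\hbt$; and that, with the right implementation, it runs within the claimed time. We use that the parse tree of the decomposition has the $n$ vertices of $G$ as its leaves and at most $n-1$ internal nodes, hence $O(n)$ nodes overall, and we write $n_x=|V(\hG_x)|$ for a parse-tree node $x$.

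\emph{Correctness.} I would prove, by induction on the parse tree, the invariant: {\sc VD-MW}$(\hG_x,\hbt^{(x)},b)$ returns $(\true,\hS)$ with $|\hS|=b$ if and only if $\hG_x$ admits a set of exactly $b$ vertices which is a Vector Dominating set with respect to $\hbt^{(x)}$. The base case---a single-vertex module---is read off directly from the first branch of the algorithm. For the inductive step at $\hG_x=\hH_x(\hG_{x,1},\ldots,\hG_{x,\hn})$, the structural fact that drives everything is that a vertex $v\in V(\hG_{x,i})$ is adjacent in $\hG_x$ precisely to $\Ni{\hG_{x,i}}{v}\cup\bigcup_{j\mid(i,j)\in E(\hH_x)}V(\hG_{x,j})$; hence, if a candidate solution puts $s_j$ vertices into each submodule $\hG_{x,j}$, then $v$ already receives exactly $\sigma_i:=\sum_{j\mid(i,j)\in E(\hH_x)}s_j$ units of domination from outside $\hG_{x,i}$, irrespective of which vertices are chosen there. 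Consequently, restricting a size-$b$ Vector Dominating set of $(\hG_x,\hbt^{(x)})$ to each $V(\hG_{x,i})$ gives an admissible composition $(s_1,\ldots,s_\hn)$ of $b$ together with a size-$s_i$ Vector Dominating set of $(\hG_{x,i},\hbt')$ for the residual thresholds $\htt'_v=\max\{0,\htt_v-\sigma_i\}$ used by the algorithm---the clamp at $0$ encoding exactly that a vertex already saturated from outside imposes no residual demand inside; conversely, the union of size-$s_i$ Vector Dominating sets of the $(\hG_{x,i},\hbt')$ with $\sum_i s_i=b$ is a size-$b$ Vector Dominating set of $(\hG_x,\hbt^{(x)})$. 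Since the algorithm enumerates every admissible composition of $b$ and reports success iff, for some composition, all submodule instances are simultaneously feasible, the induction hypothesis closes the step. Finally, since enlarging a Vector Dominating set keeps it one, for $k\le n$ there is a Vector Dominating set of size $\le k$ iff there is one of size exactly $k$; thus {\sc VD-MW}$(\hG,\hbt,k)$ answers the decision problem, and a minimum-size solution is obtained by running the procedure for $b=0,1,2,\ldots$ and returning the first success.

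\emph{Running time.} The point to be careful about is that the parse tree may have depth $\Theta(n)$, so the recursion as written blows up; the cure is memoization on a compact state. The key observation is that the threshold updates compose additively, $\max\{0,\max\{0,t-a\}-b\}=\max\{0,t-a-b\}$, so along any root-to-node path the threshold vector handed to $\hG_x$ is obtained from the original one by subtracting a single scalar offset $\sigma_x\ge0$ and clamping at $0$; a call on $\hG_x$ is therefore fully determined by the pair $(\sigma_x,b_x)$, where $b_x$ is its budget, and moreover $\sigma_x+b_x\le k$ since the submodules contributing to $\sigma_x$ are pairwise disjoint and disjoint from $\hG_x$. Hence the dynamic program has $O(k^2)$ distinct states per parse-tree node. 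Resolving one state at an internal node $x$ consists of enumerating the compositions $(s_1,\ldots,s_\hn)$ of $b_x$, of which there are at most $(b_x+1)^{\hn-1}\le(k+1)^{\mw-1}$; for each composition one spends $O(\hn^2)$ to form the offsets $\sigma_1,\ldots,\sigma_\hn$, $O(n_x)$ to build the residual threshold vectors of the children, and $O(\hn)$ on the (memoized) recursive calls. Summing (number of states)$\times$(work per state) over the $O(n)$ parse-tree nodes, and using $\sum_x n_x=O(n^2)$, $\hn\le\mw\le n$, and $k^2(k+1)^{\mw-1}\le k(k+1)^{\mw}$, the total is $O\!\big(k^2(k+1)^{\mw-1}(n^2+n\,\mw^2)\big)=O(\mw\,k(k+1)^{\mw}\,n^2)$.

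The step I expect to be the main obstacle is the running-time analysis: one must notice that the modular-decomposition tree need not be shallow---so a naive recursion is not enough---and then pin down the correct memoization state, which rests on the two facts that the per-level threshold modifications collapse into a single additive offset and that this offset stays bounded by $k$.
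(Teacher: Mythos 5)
Your proof is correct, and the correctness part follows essentially the paper's route: the same restriction/recombination argument over a budget composition $(s_1,\ldots,s_\hn)$ with residual demands $\max\{0,\htt_v-\sum_{j\mid (i,j)\in E(\hH)}s_j\}$, proved by induction on the parse tree (this is exactly the content of the paper's two lemmas supporting Algorithm {\sc VD-MW}), plus the standard observation that supersets of Vector Dominating sets remain Vector Dominating sets so that deciding size exactly $b$ for increasing $b$ suffices. Where you genuinely depart from the paper is the running-time accounting. The paper simply charges, for each fixed root budget $b\in\{1,\ldots,k\}$, one visit per parse-tree node, i.e.\ it implicitly treats the recursion as a single bottom-up sweep of the $O(n)$-node parse tree, and multiplies node cost $O(\mw\,(k+1)^{\mw}n)$ by the number of nodes and by $k$; it does not address the fact that, as written, a child is re-invoked once per composition tried at its parent. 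You confront this head-on: you observe that the clamped threshold updates compose additively, so every call at a node is determined by a single scalar offset $\sigma$ and a budget $b$ with $\sigma+b\leq k$, and memoizing on these $O(k^2)$ states per node yields the same bound $O(\mw\,k(k+1)^{\mw}n^2)$ rigorously. What the paper's accounting buys is brevity; what yours buys is a precise justification of the per-node call count (and, as a byproduct, the offsets make explicit threshold-vector recomputation unnecessary). Both arrive at the claimed complexity, so your argument can be read as a careful implementation-level justification of the bound the paper asserts.
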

Consider the parse-tree of an expression describing the input graph $G$, according to the
operations (O1)-(O4) in Definition \ref{mw}. 
We design a recursive algorithm that computes a {\em  Vector Dominating} set for the instance $\langle G, \bt,  \bu\rangle$ based on the parse-tree of $G$.

Except for the leaves of the parse-tree (representing (O1)) and thus graphs consisting of exactly one vertex, i.e.,   $\hH(\hG_1)=(\{v\},\emptyset)$), for all the other vertices of the parse-tree we just need
to focus on the operation (O4), that is $\hG = \hH(\hG_1, \ldots, \hG_\hn)$ such that $\hn \leq \max\{2,\mw(G)\}.$

For the instance $\langle G, \bt,\bu\rangle$ of the \VD\ problem, our algorithm checks if there exists a solution for the decision version of the problem, with instance $\langle G, \bt,\bu,  b\rangle$, that asks for a  {\em  Vector Dominating} set of size $b$ of $G$ with respect to the demand vector $\bt$.
The minimum positive integer $b$ for which the instance $\langle G, \bt,\bu, b\rangle$ has a solution is the size $k$ of the solution of the instance $\langle G, \bt, \bu\rangle$ of the \VD\ problem.  
The algorithm uses a recursive approach along the parse-tree of $G$ and for each vertex $\hG=\hH(\hG_1, \ldots, \hG_\hn)$ of the parse-tree and the relative instance $\langle \hG, \hbt, \bu, \hb\rangle$ with $\hb \leq |V(\hG)|$, constructs an equivalent instance of the problem on each $\hG_i$ obtained by partitioning the budget $\hb$ among the $\hn$ modules $\hG_1, \ldots, \hG_\hn$ and appropriately reducing the values in the demand vector. 
The solution set $\hS$ for $\langle \hG, \hbt, \bu, \hb\rangle$ is reconstructed by using the solutions recursively obtained for each $\hG_i$ (cf. Algorithm \ref{alg-vd-mw}).
\subsection{\DVD\  with parameters \texorpdfstring{$\mw$}{\mw} and the solution size \texorpdfstring{$k$}{k}}
 \begin{algorithm}[b!]
 \SetCommentSty{footnotesize}
  \SetKwInput{KwData}{Input}
 \SetKwInput{KwResult}{Output}
 \DontPrintSemicolon
\caption{{\sc DVD-MW}$(G=H(G_1, \ldots, G_\n), \bt, \bd,  b$) \label{alg-dvd-mw}}
\setcounter{AlgoLine}{0}
 \For{\textbf{each}  $(s_1, \ldots, s_\n) \mid \sum_{i=1}^\hn s_i=b$  and $0 \leq s_i \leq \min \{b,|V(G_i)|\}$}{
   \For{$i=1, \ldots, \n$}{
        \lIf{  $\exists \ v \in V(G_i)$   :  
              ($d_v \geq 2$)  and ($t_{v} - s_i - \sum_{j \mid \; j\neq i \; \land \; \dist_\H(i,j)\leq \d_v}s_j >0)$ }{\Return $(R=\false,S=\emptyset)$}
        \For{\textbf{each}  $v \in V(G_i)$}{ $t'_{v} = \begin{cases}
             0  & \mbox{if } d_v \geq 2\\
            \max \{0, \ t_{v} - \sum_{j \mid (i,j) \in E_H}s_j \} & \mbox{if } d_v=1
        \end{cases}$ }
                   $(R_i,S_i)=${\sc VD-MW}$(G_i, \bt', s_i$) 
                   }
        \lIf{$\bigwedge_{i=1}^\n R_i =\true$}{\Return $(R =\true, \; S = \bigcup_{i=1}^\n S_i)$}  
          }
    \Return $(R=\false,S=\emptyset)$
  \end{algorithm} 

In this section we present an algorithm to solve the \DVD\ problem by using the algorithm {\sc VD-MW} given in the previous section.
We prove the following result. 
\begin{theorem} \label{DVD-mw-k} 
   \DVD\  can be solved in time 
   {$O(\mw^2 \;k (k+1)^{2\mw} \;n^2)$.}
\end{theorem}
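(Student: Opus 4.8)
The plan is to decide, at the root of a modular-decomposition parse tree of $\G$ — where $\G=\H(\G_1,\dots,\G_\n)$ with $\n\le\max\{2,\mw\}$ — whether a \DVD-set of size $b$ exists, by reducing this in a single step to $\n$ instances of \VD, one per module, each solved by Algorithm~\ref{alg-vd-mw} (Theorem~\ref{VD-mw-k}); sweeping $b=1,2,\dots$ then returns the optimum $k$.

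The structural ingredient is Observation~\ref{fact1}, together with the complementary fact that for $i\ne j$, $u\in V(\G_i)$, $w\in V(\G_j)$ one has $\dist_\G(u,w)=\dist_\H(i,j)$: a shortest $\H$-walk from $i$ to $j$ lifts to a $\G$-path of the same length, and conversely any $\G$-path from $u$ to $w$ projects onto an $\H$-walk of no greater length. Writing $s_\ell:=|S\cap V(\G_\ell)|$ for a candidate set $S$, this yields two facts. First, a vertex $v\in V(\G_i)$ with $\d_v\ge 2$ has $\Ni{\d_v}{v}=(V(\G_i)\setminus\{v\})\cup\bigcup_{j\ne i,\ \dist_\H(i,j)\le\d_v}V(\G_j)$, so the domination requirement at $v$ is equivalent to the \emph{module-level} condition ``$v\in S$, or $s_i+\sum_{j\ne i,\ \dist_\H(i,j)\le\d_v}s_j\ge\t_v$'', which is oblivious to how $S$ meets the interiors of the modules. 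Second, a vertex $v\in V(\G_i)$ with $\d_v=1$ has $\Ni{1}{v}=\Ni{\G_i}{v}\cup\bigcup_{j:(i,j)\in E(\H)}V(\G_j)$, so after subtracting the fully determined external amount $\sum_{j:(i,j)\in E(\H)}s_j$ its requirement becomes exactly a \VD-requirement inside $\G_i$ with reduced demand $\t'_v=\max\{0,\t_v-\sum_{j:(i,j)\in E(\H)}s_j\}$.

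This is what Algorithm~\ref{alg-dvd-mw} does: enumerate every module-budget vector $(s_1,\dots,s_\n)$ with $\sum_i s_i=b$; for each, first test all radius-$\ge 2$ requirements against $(s_1,\dots,s_\n)$ and $\H$ alone; then, for every $i$, build the reduced demand vector $\bt'$ by zeroing the demands of the radius-$\ge 2$ vertices of $\G_i$ (their requirements now discharged) and doing the subtraction above for the radius-$1$ ones, and call {\sc VD-MW}$(\G_i,\bt',s_i)$; report success iff some budget vector makes all $\n$ calls succeed. Correctness would be proved by induction along the parse tree. For the forward direction, an optimal $S$ of size $b$ induces $s_\ell=|S\cap V(\G_\ell)|$; this passes the radius-$\ge 2$ tests, and $S\cap V(\G_i)$ witnesses {\sc VD-MW}$(\G_i,\bt',s_i)$, since by the identities above the subtractions lose no information and the radius-$\ge 2$ vertices of $\G_i$ rightly receive demand $0$ (their requirements handled one level up, then recursively inside {\sc VD-MW}). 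For the backward direction, if a budget vector succeeds then $S:=\bigcup_i S_i$ meets every radius-$1$ requirement because the subtractions were exact, and every radius-$\ge 2$ requirement by the module-level test.

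The delicate point — which I would isolate first as a separate lemma, in the spirit of Lemma~\ref{one} for \RD — is a \emph{selected} radius-$\ge 2$ vertex: if $v\in V(\G_i)$ has $\d_v\ge 2$ and demand $\t_v>s_i+\sum_{j\ne i,\ \dist_\H(i,j)\le\d_v}s_j$ for the budget vector induced by an optimal $S$, then $v\in S$ carries no constraint, yet the module-level test (phrased for $v\notin S$) would reject that vector. One therefore needs to establish that there is always an optimal solution in which no radius-$\ge 2$ vertex is ``forced in'' this way — for instance that whenever such a vertex of $\G_i$ lies in an optimal $S$ one may pass to another optimal solution for which the corresponding module-level inequality holds, or that one may substitute a radius-$1$ vertex of $\G_i$ for it without breaking any requirement — and this is the hypothesis the induction rests on. Granting it, the running time is routine: there are at most $\binom{b+\n-1}{\n-1}\le(k+1)^{\mw}$ budget vectors; for each, testing the radius-$\ge 2$ requirements and forming the $\bt'$ vectors costs $O(\mw\,n)$ (after an $O(n)$ precomputation of the module sizes, the $\dist_\H$ values on $\le\mw$ vertices being negligible); and each of the $\le\mw$ calls {\sc VD-MW}$(\G_i,\bt',s_i)$ at a \emph{fixed} budget runs in $O(\mw(k+1)^{\mw}n^2)$, essentially the bound behind Theorem~\ref{VD-mw-k} before its own sweep over the solution size. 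Multiplying and sweeping $b\le k$ gives $O(\mw^2\,k(k+1)^{2\mw}\,n^2)$.
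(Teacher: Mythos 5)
Your route is the same as the paper's: Algorithm~\ref{alg-dvd-mw} exactly as you describe it (enumerate the module budget vectors $(s_1,\dots,s_\n)$, test the radius-$\ge 2$ demands at module level, zero them out, reduce the radius-$1$ demands, call {\sc VD-MW} on each module at fixed budget, sweep $b\le k$), with the same counting giving $O(\mw^2 k(k+1)^{2\mw} n^2)$. But the "delicate point" you flag and then assume away is a genuine gap, and it cannot be closed by the exchange lemma you postulate. Take $G=K_{1,n-1}$ (so $\mw=2$, modules $\{v\}$ and the independent set of leaves), with $\d_u=2$ for every vertex, $\t_v=n-1$ for the center $v$, and $\t_u=1$ for every leaf $u$. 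The unique minimum solution is $S=\{v\}$, of size $1$; yet for every budget vector of total $1$ the quantity $\t_v-s_i-\sum_{j\ne i,\ \dist_\H(i,j)\le\d_v}s_j$ equals $n-2>0$, so the module-level test rejects it, and the procedure answers {\sc no} at $b=1$ although the optimum is $1$. Since the optimum is unique, there is no "other optimal solution for which the corresponding module-level inequality holds", so the hypothesis your induction rests on is false as stated, and your backward direction ("every radius-$\ge 2$ requirement is met by the module-level test") is the only part that survives unscathed.

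This is not a defect of your write-up alone: the paper's justification of line~3 asserts the same inequality $\t_v\le |S'_i|+\sum_{j\ne i,\ \dist_\H(i,j)\le\d_v}|S'_j|$ for \emph{every} radius-$\ge 2$ vertex of $\G_i$, but its argument ("since $S'$ is a distance vector dominating set\dots") applies only to $v\notin S'$ --- precisely the case split you isolated --- and on the instance above that auxiliary claim fails, so the algorithm as written errs there too. The repair has to modify the test rather than the solution: under a fixed $(s_1,\dots,s_\n)$, a vertex $v\in V(\G_i)$ with $\d_v\ge 2$ and $\t_v>s_i+\sum_{j\ne i,\ \dist_\H(i,j)\le\d_v}s_j$ cannot be dominated, hence must be among the $s_i$ vertices chosen inside $\G_i$; one should reject only when more than $s_i$ vertices of $\G_i$ are forced in this way, and the forced choices must be propagated into the inner \VD\ computation, which {\sc VD-MW} as given does not support (it fixes only the count $s_i$, not which vertices are taken, and the identity of the selected vertices matters for dominating the radius-$1$ vertices of $\G_i$). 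Until such a modification, or a correct substitute for your missing lemma, is supplied, neither your proposal nor the argument it mirrors establishes the theorem; the remaining ingredients --- the $\le(k+1)^{\mw}$ budget vectors, the $O(\mw\,n)$ per-vector bookkeeping, the fixed-budget {\sc VD-MW} cost from Theorem~\ref{VD-mw-k}, and the sweep over $b\le k$ --- match the paper and are fine.
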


Let $\G=\H(\G_1, \ldots, \G_\n)$ be the input graph and let  $\langle G,  \bt,\bd, k\rangle$ be an instance of the decision version of the \DVD\ problem, asking for a  {\em  Distance Vector Dominating} set of size $k$ of $G$ with respect to the demand vector $\bt$ and the radius vector $\bd$.
Our algorithm {\sc DVD-MW}, which checks for a solution for the decision version of the problem with instance $\langle G,\bt, \bd,   b \rangle$,  is based on the following easy considerations, for any vertex $v \in V(\G_i)$ and $i \in [\n]$:\\
--  If $d_v \geq 2$ then any vertex in $V(\G_i)$ that is selected in the solution dominates $v$ (recall Observation \ref{fact1}) together with any vertex in $V(\G_j)$ that is selected in the solution, for $j$ such that $j\neq i$ and $\dist_\H(i,j)\leq \d_v$.\\
-- If $d_v=1$ then any vertex in $V(\G_j)$ with $(i,j) \in E(H)$, that is selected in the solution, dominates $v$.

Consider a partition $(s_1, \ldots, s_\n)$ of $b$  and select $s_i$ vertices in  $G_i$, for each $i \in [\n]$. If there exists $v \in V(\G_i)$ with $d_v \geq 2$ and demand $t_v > s_i + \sum_{j \mid \; j\neq i \; \land \; \dist_\H(i,j)\leq \d_v}s_j$ then the partition has to be discarded (since there are not enough selected vertices to dominate $v$). Otherwise, all the vertices with $d_v \geq 2$ are dominated by any choice of vertices   satisfying the  partition and we  only have to worry about each vertex $v$ with $d_v =1$. In particular, the selection in each $V(\G_i)$ has to be accurate in order to have a vector dominating set for   $\langle \G_i, \bt',\one, s_i\rangle$, where $t'_v$ is defined in Algorithm \ref{alg-dvd-mw}. 

\section{ FPT algorithms for graphs of bounded treewidth}\label{sec-tw}
\label{secLBTW} 
\begin{definition}\label{treedec} 
A tree decomposition of a graph $G=(V,E)$ is  a pair $(T,\{W_i\}_{i \in V(T)})$, where $T$ is a tree and   each  $i$ in $T$ is assigned a 
$W_i \subseteq V$ such that: 
\begin{itemize} 
\item[1.] 
  $ \bigcup_{i \in V(T)} W_i=V$.
\item[2.]  For each  $e=(v,u)\in E,$ there exists  $i$ in $T$ s.t. $W_i$ contains both $v$ and $u$.
\item[3.]  For each $v \in V,$  the tree induced by 
 $T_v=\{ i \in V(T) \mid v \in W_i\}$
is connected. 
\end{itemize}
\end{definition} 
  The width of a tree decomposition  $(T,\{W_i\}_{i \in V(T)})$ of a graph $G$, is defined as $\max_{i \in V(T)}|W_i|-1$.
The treewidth of  $G$, denoted by $\tw(G)$, is the minimum width over all tree decompositions of $G$.
{Deciding whether a graph has tree decomposition of treewidth at most $k$ is NP-complete \cite{ACP87} and 
proved fpt in \cite{BK91}.}

\begin{definition}
\label{nice}
\cite{Treewidth_book} \ A 
tree decomposition $(T,\{W_i\}_{i \in V(T)})$ is called nice if  it satisfies conditions 1. and 2.:
\begin{itemize} 
\item[1.] $W_r=\emptyset,$ for $r$ the root of $T$ and $W_i=\emptyset,$ for every leaf $i$ of $T$.
\item[2.] Every non-leaf vertex of $T$ is of one of the following three types: \\
 \textbf{Introduce:} a vertex $i$ with  one child $j$ s.t. $W_i=W_{j}{\cup} \{v\}$ for a vertex $v\notin W_{j}$.\\
	 \textbf{Forget:} a vertex $i$ with  one child $j$ s.t.  $W_{j}=W_i\cup \{v\}$ for a vertex $v\notin W_{i}$.\\
	 \textbf{Join:} a vertex $i$ with two children $i_1,i_2$ s.t. $W_{i}=W_{i_1}=W_{i_2}.$
\end{itemize}
\end{definition} 
Consider a graph  $G=(V,E)${. Given a tree decomposition of $G$ of width $\tw$,} one can compute in polynomial time a nice tree decomposition $(T,\{W_i\}_{i \in V(T)})$ of $G$ of treewidth at most $\tw$ having $O(\tw|V (G)|)$ vertices \cite{Treewidth_book}. 
Let  $T$ be rooted in  $r$. For any  $i$ in $T,$   denote by $T(i)$ the subtree of $T$ rooted at  $i$,  by  $W(i)=\bigcup_{j \in T(i)} W_j$  the union of  the bags in $T(i)$,  and  by  
$s_i = \lvert W_i\lvert $ the size of $W_i$.
\def\C{{\cal C}}

\def\vd{L}
\def\vD{{\cal \vd}}
\def\K{{\cal K}}

\smallskip
\noindent \textbf{A FPT algorithm parameterized by \texorpdfstring{$\tw$}{tw} plus 
$\tau$ for  \VD.}
We give a dynamic programming algorithm which, exploiting a nice tree decomposition, 
recursively solves 
the \VDl\ (\VD) problem.
Fix  $i \in V(T),$ to  recursively reconstruct the solution, we  calculate  optimal solutions under different  hypotheses based on the following considerations:
 For each vertex $v \in W_i$ we have two cases: $v \in S$, $v\notin S$. We are going to consider all the $2^{s_i}$ combinations of such states with respect to some solution $S$ of the problem. We denote each combination with a binary vector $\vd_i$ of size $s_i$ indexed by the elements of $W_i$, where for each $v \in W_i$, $\vd_i(v)=1$, if $v \in S$ and $\vd_i(v)=0$, otherwise.  The configuration $\vd_i = \emptyset$ denotes the vector of length $0$ corresponding to an empty bag. We denote by $\vD_i$ the family of all the $2^{s_i}$ possible state vectors of the $s_i$ vertices in $W_i$.\\
We 
consider all the possible contributions to the \VD\ problem, of vertices in $V \setminus
W(i)$; that is, for each $v \in  W_i$, we consider all the possible demands among
$\t_v, \t_v {-} 1,\ldots,0$. 
 As a consequence, we will have up to $(\tmax{+}1)^{s_i}$ demand combinations, where $ \tmax {=} \max_{v \in V} \t_v$. We denote each possible demand combination with a vector $K_i$, indexed by the $s_i$ elements in $W_i$. The configuration $K_i = \emptyset$ denotes the demand vector of length $0$ corresponding to an empty bag. Moreover,  $\K_i$ represents the family of all the possible demand combinations of vertices in $W_i$.
 
The following definition introduces the values that will be computed by the algorithm in order to keep track of all the above cases.
 
 \begin{definition}\label{eq_Bi} 
 For each vertex $i\in V(T),$ each $\vd_i\in \vD_i$ and each $K_i \in \K_i$, we define
 $B_i(\vd_i,K_i)$ as the minimum number of vertices to be selected in $G[W(i)]$ in order to dominate all the remaining vertices in $G[W(i)]$, where the states and the demands of vertices in $W_i$ are given by $\vd_i$ and $K_i$.
 \end{definition}
 By noticing that the root $r$ of a nice tree decomposition has  $W_r = \emptyset$, we have that the solution of the \VD\ problem $\langle G, \bt,  \one \rangle$  can be obtained by computing  $B_r(\emptyset,\emptyset).$ 
  \begin{lemma} \label{claimTW2}
 For each  $i\in T$, the computation of $B_i(\vd_i, K_i)$,  for each    $\vd_i\in \vD_i$  and  $K_i\in \K_i$, comprises $O(2^{s_i}(\tmax+1)^{s_i})$ values,  each of which can be computed recursively in time $O(s_i)$.
 \end{lemma}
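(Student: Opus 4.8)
The plan is to establish the recurrence of Lemma~\ref{claimTW2} by induction along the nice tree decomposition, distinguishing the four node types of Definition~\ref{nice}. The bound on the number of entries is immediate: at a node $i$ with $s_i=|W_i|$ there are $|\vD_i|=2^{s_i}$ state vectors and $|\K_i|=\prod_{v\in W_i}(\t_v+1)\le(\tmax+1)^{s_i}$ demand vectors, so the table at $i$ holds $O(2^{s_i}(\tmax+1)^{s_i})$ values; by Definition~\ref{eq_Bi} the optimum for $\langle G,\bt,\one\rangle$ is $B_r(\emptyset,\emptyset)$ at the (empty-bag) root $r$. It therefore suffices to give, for each node type, a formula that computes every $B_i(\vd_i,K_i)$ from the already-built table(s) of the child (children) in time $O(s_i)$. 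For a leaf, $W_i=\emptyset$ and $B_i(\emptyset,\emptyset)=0$.

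For an \textbf{introduce} node ($W_i=W_j\cup\{v\}$, child $j$): every neighbour of $v$ inside $G[W(i)]$ lies in $W_i$, so, writing $\vd_j$ for the restriction of $\vd_i$ to $W_j$, the child demand of $u\in W_j$ is $K_j(u)=\max\{0,\,K_i(u)-1\}$ if $\vd_i(v)=1\wedge u\in N(v)$ and $K_j(u)=K_i(u)$ otherwise; one additionally tests $v$ itself (if $\vd_i(v)=0$, its number of selected neighbours in $W_i$, read from $\vd_i$ and $G[W_i]$, must be at least $K_i(v)$). If the test passes then $B_i(\vd_i,K_i)=B_j(\vd_j,K_j)+\vd_i(v)$, else $+\infty$; building $K_j$ and testing cost $O(s_i)$. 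For a \textbf{forget} node ($W_j=W_i\cup\{w\}$, child $j$): $w$ occurs in no higher bag, so it must be dominated now, which (as the entries of $K_j$ are capped by the $\t$'s) forces $K_j(w)=\t_w$ when $w\notin S$ and makes $K_j(w)$ vacuous when $w\in S$, while the demands of all $u\in W_i$ are unchanged because $W(i)=W(j)$; hence
\[
B_i(\vd_i,K_i)=\min\bigl(B_j(\vd_i[w\mapsto 0],K_i[w\mapsto\t_w]),\ B_j(\vd_i[w\mapsto 1],K_i[w\mapsto 0])\bigr),
\]
a minimum over two child entries, again $O(s_i)$.

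The interesting case is the \textbf{join} node ($W_i=W_{i_1}=W_{i_2}$, with $W(i_1)\cap W(i_2)=W_i$). Here a vertex of $S\cap W_i$ is counted in both subtree totals, while the selected neighbours of a bag vertex $v$ come disjointly from the two forgotten parts; writing $\gamma(v)$ for the number of selected neighbours of $v$ inside $W_i$, the correct combination is
\[
B_i(\vd_i,K_i)=\Bigl(\min\bigl\{\,B_{i_1}(\vd_i,K^1)+B_{i_2}(\vd_i,K^2)\ :\ K^1(v)+K^2(v)=K_i(v)+\gamma(v)\ \forall v\,\bigr\}\Bigr)-\bigl|\vd_i^{-1}(1)\bigr|.
\]
The difficulty, and what I expect to be the crux, is to evaluate the inner minimum over all coordinatewise distributions of the demand without incurring an extra $(\tmax+1)^{s_i}$ factor. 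I would try to do this by combining the two child tables one bag vertex at a time: with $\vd_i$ fixed, processing the vertices of $W_i$ successively reduces the $s_i$-dimensional min-plus combination to $s_i$ one-dimensional ones, and, exploiting that $B_{i_1}(\vd_i,\cdot)$ and $B_{i_2}(\vd_i,\cdot)$ are monotone non-decreasing in each demand coordinate (so that the relevant one-dimensional combinations behave well), the work charged to each produced value $B_i(\vd_i,K_i)$ stays $O(s_i)$. Making this last point rigorous is the step I expect to require the most care.

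Correctness of each recurrence follows from Definition~\ref{eq_Bi} by splitting an optimal partial solution for $G[W(i)]$ according to the child structure and, conversely, assembling children's optimal partial solutions; the induction runs from the leaves to the root. Summing the $O(s_i)$ cost over the $O(2^{s_i}(\tmax+1)^{s_i})$ values of node $i$, and then over the $O(\tw\,n)$ nodes of the decomposition with $s_i\le\tw+1$, then yields the running time claimed for \VD\ in Theorem~\ref{VD-tw-t}.
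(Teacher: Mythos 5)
Your entry count and your leaf, introduce, and forget recurrences coincide with the paper's proof: your introduce rule (decrement the demands of $N(v)$ in the child vector when $\vd_i(v)=1$, otherwise check that the selected bag neighbours of $v$ already meet $K_i(v)$) is exactly the paper's equation (\ref{eq:introduce2}), and your forget rule is exactly (\ref{eq:forget2}). The genuine gap is the join node, which is precisely where the lemma's claim of $O(s_i)$ time per entry has to be argued. You replace the paper's rule by a min-plus combination over all coordinatewise splits $K^1(v)+K^2(v)=K_i(v)+\gamma(v)$, and you concede that you cannot yet evaluate this within the claimed time. Monotonicity of $B_{i_1}(\vd_i,\cdot)$ and $B_{i_2}(\vd_i,\cdot)$ in each demand coordinate does not rescue the coordinate-by-coordinate scheme: even a one-dimensional min-plus convolution of two merely monotone sequences of length $\tmax+1$ cannot be read off at the boundary (the minimizer can be interior), and without convexity or total monotonicity there is no $O(1)$-amortized evaluation per output entry; as sketched, your scheme incurs at least an extra factor of order $\tmax+1$ per value, not $O(s_i)$. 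So the statement actually being proved --- each value computable recursively in time $O(s_i)$ --- is not established by your argument, and carrying your join rule through would also alter the running time claimed in Theorem \ref{VD-tw-t}.

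For comparison, the paper's proof avoids demand splitting at join nodes altogether: it sets $B_i(\vd_i,K_i)= B_{i_1}(\vd_i,K_i)+B_{i_2}(\vd_i,K_i)-|S_{i}(\vd_i)|$, i.e., the same residual-demand vector $K_i$ is passed to both children and only the selected bag vertices, counted twice, are corrected; this costs $O(1)$ per entry, so the $O(s_i)$ bound is dominated by the introduce case (building $K_j^-(N(v))$), and the lemma follows immediately. Note the semantic difference: your constraint with the correction term $\gamma(v)$ is the accounting that allows the residual demand of a bag vertex to be covered partly in each subtree, whereas the paper's rule requires each child, with the same $K_i$, to certify the coverage on its own side. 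If you want to keep your more permissive join, you must either supply a convolution argument meeting the $O(s_i)$ bound (which the monotonicity remark alone does not give) or accept a weaker time bound; as written, your proof leaves exactly that step open.
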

\begin{theorem}    \label{VD-tw-t} 
{If a tree decomposition of $G$ with width $\tw$ is given then}  \VD\ is solvable in  time $O(\tw^2 2^{\tw}(\tmax+1)^{\tw} \; n)$.
\end{theorem}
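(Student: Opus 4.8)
The plan is to set up a standard dynamic-programming traversal over a nice tree decomposition and combine the per-node cost from Lemma~\ref{claimTW2} with the number of nodes in the decomposition. First I would recall that, given a width-$\tw$ tree decomposition, one can in polynomial time produce a nice tree decomposition $(T,\{W_i\}_{i\in V(T)})$ of width at most $\tw$ with $O(\tw\, n)$ nodes, as stated just after Definition~\ref{nice}; each bag satisfies $s_i=|W_i|\le \tw+1$. Then I would describe the recurrences for $B_i(\vd_i,K_i)$ according to the four node types (leaf, introduce, forget, join): for a leaf, $W_i=\emptyset$ so the only entry is $B_i(\emptyset,\emptyset)=0$; for an introduce node adding $v$, we branch on $\vd_i(v)\in\{0,1\}$, pass the residual demand of $v$ unchanged to the child (since $v$'s neighbours in $W(i)\setminus W_i$ are, by property~3 of tree decompositions, already accounted for), and add $1$ to the count iff $v\in S$, while rejecting the configuration if $v\notin S$ and its recorded residual demand cannot be met by the selected neighbours inside $W_i$; for a forget node removing $v$, we take the minimum over the two states of $v$ among the child entries whose restriction to $W_i$ equals $(\vd_i,K_i)$ and whose recorded residual demand for $v$ is $0$ (so $v$ is genuinely dominated once it leaves all future bags); for a join node with children $i_1,i_2$ sharing the bag, we combine entries with matching $\vd_i$ by splitting each vertex's residual demand $K_i(v)=K_{i_1}(v)+K_{i_2}(v)$ minus the double-counted contribution of selected vertices inside $W_i$, and add the counts minus $|\{v\in W_i:\vd_i(v)=1\}|$ to avoid double counting the bag vertices.

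Next I would argue correctness by induction on $T$: the invariant is exactly Definition~\ref{eq_Bi}, namely that $B_i(\vd_i,K_i)$ is the minimum number of vertices to select inside $G[W(i)]$ so that every vertex of $W(i)\setminus W_i$ is dominated and every $v\in W_i$ with $\vd_i(v)=0$ receives at least $\t_v-K_i(v)$ of its demand from within $W(i)$; the base case is the leaf, and each node type is checked against its recurrence using property~3 (the "connectivity" of $T_v$) to ensure no edge of $G$ is missed and none is counted twice. Since $W_r=\emptyset$, the optimum for $\langle G,\bt,\one\rangle$ is $B_r(\emptyset,\emptyset)$, as already noted before Lemma~\ref{claimTW2}.

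Finally I would assemble the running time. By Lemma~\ref{claimTW2}, at each node $i$ the table has $O(2^{s_i}(\tmax+1)^{s_i})=O(2^{\tw}(\tmax+1)^{\tw})$ entries, each computable in time $O(s_i)=O(\tw)$ (for join nodes one must be slightly careful, but the splitting of demands across the two children can be organised so the cost per entry stays linear in $s_i$ up to the subset-convolution-free direct enumeration already subsumed in Lemma~\ref{claimTW2}). Multiplying by the $O(\tw\, n)$ nodes of the nice tree decomposition gives $O(\tw\cdot \tw\cdot 2^{\tw}(\tmax+1)^{\tw}\cdot n)=O(\tw^2\,2^{\tw}(\tmax+1)^{\tw}\,n)$, as claimed. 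I expect the main obstacle to be the bookkeeping at the join node: one must correctly distribute the "already satisfied" demand of each bag vertex between the two subtrees without double-counting selected bag vertices, and show that iterating over all valid splits does not inflate the per-node cost beyond what Lemma~\ref{claimTW2} allows; handling this cleanly (rather than the routine introduce/forget cases) is where the argument needs the most care.
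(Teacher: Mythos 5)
Your closing computation is exactly the paper's proof of Theorem~\ref{VD-tw-t}: take the per-node table of Lemma~\ref{claimTW2}, note $s_i\le \tw+1$, multiply the $O(2^{s_i}(\tmax+1)^{s_i})$ entries, each costing $O(s_i)$, by the $O(\tw\, n)$ nodes of the nice tree decomposition, and recover the optimal set by backtracking. The trouble is that the recurrences you substitute for that lemma are not the ones it analyses, and the difference sits precisely where your time bound breaks. At a join node the lemma's recurrence passes the \emph{same} pair $(L_i,K_i)$ to both children, $B_i(L_i,K_i)=B_{i_1}(L_i,K_i)+B_{i_2}(L_i,K_i)-|\{v\in W_i \mid L_i(v)=1\}|$, which is what makes each join entry computable in $O(1)$ time. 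You instead enumerate all splits $K_i(v)=K_{i_1}(v)+K_{i_2}(v)$ (minus the bag correction) for every $v\in W_i$ and assert that this is ``subsumed in Lemma~\ref{claimTW2}''. It is not: a single entry would then range over up to $\prod_{v\in W_i}(K_i(v)+1)=O((\tmax+1)^{\tw+1})$ splits, so the join nodes alone cost on the order of $\tw\,2^{\tw}(\tmax+1)^{2\tw}\,n$, essentially the $[(\tmax+1)^2+1]^{\bw/2}$-type factor of \cite{IOU} that this theorem is meant to remove. To obtain the stated bound you must either justify that the unsplit join rule of Lemma~\ref{claimTW2} preserves the invariant of Definition~\ref{eq_Bi}, or exhibit a splitting join whose amortised cost per entry is $O(s_i)$; you do neither, and you flag the point yourself without resolving it.

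A secondary gap: your introduce/forget bookkeeping mixes two conventions and, as stated, has no well-defined invariant to induct on. In the paper's scheme $K_i(v)$ is the part of $\t_v$ that must be met inside $G[W(i)]$: when an unselected $v$ is introduced one checks that the selected vertices of $N(v)\cap W_j$ already cover $K_i(v)$ (all neighbours of $v$ inside $W(i)$ lie in $W_j$); when a \emph{selected} vertex is introduced, the recorded demands of its bag neighbours are decremented (the $K_j^-(N(v))$ step); and when an unselected $v$ is forgotten its entry is pinned to the full $\t_v$. You both check the recorded demand of $v$ against selected bag neighbours at its introduce node and require its recorded demand to be $0$ at its forget node, while passing the other demands ``unchanged''; without an explicit rule saying where demands are ever decremented these two requirements are incompatible. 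This part is routine to repair, but it must be fixed before the correctness induction, and the join-node cost must be settled before the running time, can be claimed.
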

\begin{proof}{}
The decomposition tree has at most $O(\tw \; n)$ vertices \cite{Treewidth_book}. Hence, the desired value  $B_r(\emptyset,\emptyset)$, which
corresponds to the solution of the \VD\ instance $\langle G, \bt,  \one \rangle$, can be computed in time
$O(\tw^2 2^{\tw}(\tmax+1)^{\tw} \; n).$ The optimal set $S$ can be computed in the same time by
standard backtracking technique.
\qed
\end{proof}
\noindent \textbf{A FPT algorithm parameterized by $\tw$ plus $\delta$ for \RD.}
Exploiting a nice tree decomposition of the input graph $G$ and a strategy similar to the one adopted in \cite{Borradaile2016} we obtain the following result.
\begin{theorem} \label{RD-tw-d}
{If a tree decomposition of $G$ with width $\tw$ is given then} \RD\  is solvable in  time $O(\tw(2\dmax+1)^\tw  (n + \tw^2)\; n^2 \log n)$.
\end{theorem}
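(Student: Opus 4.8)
The plan is to give a dynamic programming algorithm over a nice tree decomposition, adapting the approach of Borradaile and Le \cite{Borradaile2016} for \textsc{DD} to the vertex-weighted radii of \RD. The key idea for \textsc{DD}/\textsc{(k,r)-Center} is that, at each bag, one cannot afford to remember the exact distance from each bag vertex to the nearest selected vertex (which could be as large as $n$), but one can remember the distance \emph{capped} at $r$ (or, in a common variant, remember for each bag vertex a value in $\{-\dmax,\ldots,\dmax\}$ indicating either ``the closest selected vertex already seen in $W(i)$ is at distance $d$'' or ``this vertex still needs a selected vertex within distance $d$ from outside $W(i)$''). For \RD\ the radius $\d_v$ differs per vertex, but all radii are at most $\dmax$, so the same range of distances suffices. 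Concretely, I would define, for each node $i$ of the nice tree decomposition and each vector $f_i : W_i \to \{-\dmax,\ldots,\dmax\}$ (and a Boolean per vertex recording whether $v\in S$), a value $B_i(f_i)$ equal to the minimum number of selected vertices in $G[W(i)]$ consistent with the states $f_i$, where a positive entry $f_i(v)=d$ means $v$ has a selected vertex within distance $d$ inside $W(i)$, and a negative entry $f_i(v)=-d$ means $v$ is still ``owed'' a selected vertex within distance $d$ (and we will only allow such a vertex to be forgotten if $d \le \d_v$, i.e.\ there is still a chance to satisfy it later); a vertex with $\d_v \le d$ that has been satisfied is simply marked done.

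Next I would specify the transitions. For a \textbf{Leaf}, $W_i=\emptyset$ and $B_i(\emptyset)=0$. For an \textbf{Introduce} node adding $v$: either $v\in S$ (cost $+1$, set $f_i(v)=0$, and for every other $u\in W_i$ update $f_i(u)$ to $\min(f_i(u), \dist_{G[W_i]}(u,v))$ if that improves a pending or recorded distance), or $v\notin S$ (then $f_i(v)$ must be $-\d_v$ in the fresh state, since nothing in $W(i)$ beyond the bag can reach it yet — or a smaller-magnitude negative value, which we then take as a feasible weakening). For a \textbf{Forget} node removing $v$: we may keep only states in which $v$ is ``settled'', i.e.\ either $v\in S$, or $f_j(v)=d>0$ with $d\le\d_v$ (wait — the vertex is only truly satisfied if some selected vertex lies within $\d_v$; so we require $0<f_j(v)\le \d_v$), and we must also propagate the effect of $v$ on the remaining bag vertices if $v$ itself is/was selected, which is where capped distances matter. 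For a \textbf{Join} node with children $i_1,i_2$: combine $f_{i_1}$ and $f_{i_2}$ by taking, for each $v\in W_i$, the coordinate-wise ``best'' value — $\min$ of the distances when both sides recorded a reached selected vertex, and appropriate handling when one side is pending — and add the two costs minus the overcount of selected bag vertices. Finally $B_r(\emptyset)$ gives the answer, and the optimal $S$ is recovered by backtracking.

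The bookkeeping of distances is the subtle part, so I would follow \cite{Borradaile2016} closely: the standard trick is to store, for each bag vertex, an integer in $\{-\dmax,\ldots,\dmax\}$ and to precompute, whenever a vertex is introduced, its distances to the current bag inside the already-revealed graph $G[W_i]$. Since all shortest-path distances relevant to an Introduce step are between a new vertex and the $O(\tw)$ bag vertices within the current revealed subgraph, a single-source BFS/Dijkstra-type computation per Introduce node suffices, costing $O((n+\tw^2)\log n)$ or so per node (the $\tw^2$ term covers updating all pairs of bag entries, the $\log n$ from a heap, the $n$ from exploring $W(i)$); multiplied by $O(\tw n)$ nodes and $(2\dmax+1)^{\tw}$ state vectors per node this yields the claimed $O\!\left(\tw(2\dmax+1)^\tw (n+\tw^2)\, n^2\log n\right)$ bound. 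The main obstacle — and the place I expect to spend the most care — is getting the semantics of the negative ``pending with radius $d$'' entries exactly right at Forget and Join nodes: one must ensure that a pending vertex $v$ is never forgotten unless it can still be served within $\d_v$ by a vertex not yet introduced, that Join correctly merges two independent partial services without double-counting, and that monotonicity holds so that weakening a recorded distance (or strengthening a demand) never makes a state spuriously infeasible. Once these invariants are pinned down, correctness follows by the usual induction on the tree decomposition (every valid \RD-solution restricted to $G[W(i)]$ induces a consistent state with cost $\ge B_i$, and conversely every state of value $B_i$ extends), and the running-time accounting is the routine calculation sketched above.
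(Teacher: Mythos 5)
You follow the same general route as the paper---a Borradaile--Le style dynamic program over a nice tree decomposition with signed distance labels in $\{-\dmax,\ldots,\dmax\}$ and distances precomputed when a vertex is introduced---but the state semantics you chose breaks the claimed running time exactly at the join nodes, which is the point you explicitly leave open. With your meaning of a positive entry (``some selected vertex within distance $d$ inside $W(i)$''), a bag vertex may be served on one side of a join only, so your join must enumerate \emph{pairs} of child vectors and merge them coordinate-wise by $\min$; that is $(2\dmax+1)^{2\tw}$ work per join node (the classical, non-optimal DP), which does not fit inside the claimed $O(\tw(2\dmax+1)^{\tw}(n+\tw^2)\,n^2\log n)$, and your running-time accounting (nodes $\times$ states $\times$ per-node distance computation) never charges for this combination step at all. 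The device the paper takes from \cite{Borradaile2016}, and which your sketch is missing, is to let the label of $v$ be the \emph{exact} value $\dist_G(v,S)$ for one fixed global candidate solution $S$ (with $|C_i(v)|\le\d_v$), the sign recording only whether that distance is already realized inside $G[W(i)]$; then both children of a join are evaluated at the \emph{same} vector $C_i$, and the join is simply $A_i(b,C_i)=\bigvee_{a}\bigl(A_{i_1}(a+S(C_i),C_i)\wedge A_{i_2}(b-a,C_i)\bigr)$, where $S(C_i)$ counts the zero entries, costing only $O(b)$ per table entry. Relatedly, the paper's table is boolean and carries an explicit budget index $b\le\beta$, and the $n^2\log n$ in the bound comes from that budget index together with the search for the smallest feasible $\beta$, not from a Dijkstra heap.

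A second gap is the forget rule for pending (negative) labels: ``$v$ can still be served within $\d_v$ by a vertex not yet introduced'' is not a condition checkable at node $i$. In the paper, a vertex $v$ with label $c<0$ may be forgotten only if some witness $u$ in the current bag satisfies $C_j(u)=c+\dist_G(u,v)$, i.e., the promised serving path is forced through the bag with consistent labels, and the complementary consistency (local validity, and the sign updates $\bC_j$) is enforced at introduce nodes; these are precisely the invariants that make the induction work, and your proposal defers them rather than supplying them. So the overall plan is the right one, but as written the join rule would fail the stated time bound and the correctness of the negative-label bookkeeping is not established.
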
 

\section{Concluding remarks} 
{We introduced the Distance Vector Domination problem which generalizes both
distance and multiple domination, at individual (i.e., vertex) level.
The problem is motivated by the development of strategies to mitigate the spread of fake information. Indeed the set identified by the problem can be used to detect a set of individuals who, disseminating debunking information, can prevent the spreading, of misinformation.
We analyzed the parameterized complexities of the problem according to several standard and structural parameters. } {It eluded us the design of an FPT algorithm for the DVD problem parameterized by the combination of treewidth and one of the other problem parameters, such as the size $k$ of the solution, the largest demand $\tau$ or the largest radius $\delta$, which we leave as an open problem. Additionally, it would be interesting to investigate the complexity of the RD problem with respect to the clique-width parameter.}


\end{document}